\date{June 2, 2011}
\newtheorem{Theorem}{Theorem}
\newtheorem{proposition}{Proposition}
\newtheorem{Lemma}{Lemma}
\theoremstyle{remark}
\newtheorem{remark}{Remark}
\numberwithin{equation}{section}
\newcommand{\R}{\mathbb{R}}
\newcommand{\F}{\mathcal{F}}
\newcommand{\be}{\begin{equation}}
\newcommand{\ee}{\end{equation}}
\newcommand{\bea}{\begin{align}}
\newcommand{\eea}{\end{align}}
\newcommand\infspec{{\rm{inf\, spec\,}}}
\newcommand\bfu{\mathbf{u}}
\begin{document}

\title{Binding of Polarons and Atoms at Threshold}

\author[R.L. Frank]{Rupert L. Frank} \address{R.L. Frank, Department of Mathematics, Princeton University, Washington Road, Princeton, NJ 08544, USA} \email{rlfrank@math.princeton.edu}

\author[E.H. Lieb]{Elliott H. Lieb}
 \address{E.H. Lieb, Departments of Mathematics and Physics, Princeton University, P.O. Box 708, Princeton, NJ 08544, USA} \email{lieb@princeton.edu}

\author[R. Seiringer]{Robert Seiringer}
 \address{R. Seiringer, Department of Mathematics and Statistics, McGill
  University, 805 Sherbrooke Street West, Montreal, QC H3A 2K6,
  Canada} 
\email{rseiring@math.mcgill.ca}

\begin{abstract}
If the polaron coupling constant $\alpha$ is large enough, bipolarons
or multi-polarons will form. When passing through the critical
$\alpha_c$ from above, does the radius of the system simply get arbitrarily
large or does it reach a maximum and then explodes? We prove that it
is always the latter. We also prove the analogous statement for the
Pekar-Tomasevich (PT) approximation to the energy, in which case there is a
solution to the PT equation at $\alpha_c$. Similarly, we show that the
same phenomenon occurs for atoms,
e.g., helium, at the critical value of the nuclear charge. 
Our proofs rely only on energy estimates, not on a detailed
analysis of the Schr\"odinger equation, and are very general. They
use the fact that the Coulomb repulsion decays like
$1/r$, while `uncertainty principle' localization energies decay more
rapidly, as $1/r^2$.
\end{abstract}

\maketitle
\renewcommand{\thefootnote}{${}$} \footnotetext{\copyright\, 2011 by
  the authors. This paper may be reproduced, in its entirety, for
  non-commercial purposes.}

%%%%%%%%%%%%%%%%%%%%%%%%%%%%%%%%%%%%%%%%%%%%%%%%%%%%%%%%%%%%%%%%%%%%%

\section{Introduction}

We investigate the binding-unbinding transition of the ground state of
multi-polaron systems, both in the original Fr\"ohlich model \cite{Fr} and in
the Pekar-Tomasevich large coupling approximation \cite{PeTo}. The parameter
that varies is the
strength of the electron-electron Coulomb repulsion, conventionally denoted by
$U>0$. (We could, alternatively, let the field coupling constant $\alpha$ vary,
but it is mathematically convenient and equivalent to vary $U$.)

The conclusion of our rigorous analysis is that this transition is always
`first-order' in accordance with a proposal of Verbist, Peeters and Devreese
\cite{VPD}. That is, the Coulomb repulsion jumps discontinuously from a positive
value to zero and the radius, too, jumps discontinuously.

The same thing happens for the helium atom unbinding as $U$ varies, as shown by
M. and T. Hoffmann-Ostenhof and Simon \cite{HoHoSi}. Our analysis does not
proceed, as in \cite{HoHoSi}, by analyzing the Schr\"odinger equation, but,
instead, is based on energy estimates. Indeed, our proof is also valid in the
presence of a magnetic field; the proof in \cite{HoHoSi} cannot accommodate a
magnetic field because it relies on the positivity of the ground state wave
function. To illustrate our approach we first show how to reproduce the result
of \cite{HoHoSi} as a warm-up to the harder polaron problem. The overall lesson
is that \emph{this kind of discontinuous binding will occur whenever the net
repulsion at large distances falls of slower than $r^{-2}$.} This conclusion
does not depend on Bose or Fermi statistics.

Fr\"ohlich's Hamiltonian for $N$ particles interacting with the longitudinal
distortions of a polar crystal is
\begin{equation}
H^{(N)}_U = \sum_{i=1}^N \left(p_i^2 - \sqrt\alpha\phi(x_i) \right) +
H_f + U\, V_C(X)
\end{equation}
with $X=(x_1,\dots,x_N)\in \R^{3N}$ and 
\begin{equation}
V_C(X) = \sum_{i<j} \frac{1}{|x_i-x_j|} \,.
\end{equation}
The Hilbert space is $L^2(\R^{3N})\otimes \F$, where $\mathcal F$ is the bosonic
Fock space for the longitudinal optical modes of the crystal, with scalar
creation and annihilation operators $a^\dagger(k)$ and $a(k)$ satisfying
$[a(k),a^\dagger(k')]=\delta(k-k')$. The phonon field energy is
\begin{equation}\label{Hf.eq}
H_f = \int_{\R^3} dk\, a^\dagger(k) a(k)
\end{equation}
and the interaction of the crystal modes with the electron is
\begin{equation}
\phi(x) = \frac{1}{\sqrt2\pi} \int_{\R^3} \frac{dk}{|k|} \left( e^{ikx} a(k)
  + e^{-ikx} a^\dagger(k) \right) \,,
\end{equation}
with coupling constant $\alpha>0$. We define the \emph{ground state energy} to
be
\begin{equation}
 \label{eq:gse}
E^{(N)}_U(\alpha) = \infspec H^{(N)}_U = \inf_{\|\Psi\|=1} \langle \Psi | 
H^{(N)}_U |\Psi\rangle
\end{equation}
and the \emph{break-up energy} to be
\begin{equation}
 \label{eq:mbe}
\widetilde E^{(N)}_U(\alpha) = \min_{1\leq n\leq N-1}
\left( E^{(n)}_U(\alpha) + E^{(N-n)}_U(\alpha) \right)\,.
\end{equation}
It is always the case that $E^{(N)}_U(\alpha)\leq \widetilde
E^{(N)}_U(\alpha)$. If there is strict inequality, we say that the
$N$-particle system is \emph{bound}.

\emph{A note on particle statistics.} As we said, the results in this paper do
not depend
on Bose statistics (restrictions to symmetric spatial functions) or Fermi
statistics (restrictions to anti-symmetric functions of space and spin). The
numerical values of the binding energies and the discontinuities, etc., will
depend on statistics, but not the qualitative features
proved here. 

Let us start by recalling some previous results. Some of
these results are only
valid in the Pekar-Tomasevich (PT) approximation, which will be described later.

\begin{enumerate}
 \item Physically, $U>2\alpha$, but the case $U\leq 2\alpha$ has been
considered in \cite{GrMo}. They show that for \emph{fermions} the energy
satisfies $-C_1 N^{7/3} \leq E^{(N)}_U(\alpha) \leq -C_2 N^{7/3}$. The same
arguments for bosons lead to $-C_1' N^{3} \leq E^{(N)}_U(\alpha) \leq -C_2'
N^{3}$. In \cite{BeBl,BeFrLi} the sharp asymptotic constants for bosons and
fermions are identified in the PT approximation.

\item The case $U=2\alpha$ is special. In the context of the PT model
\cite{GrMo} showed that the energy behaves linearly in $N$ for fermions and in
\cite{BeFrLi} an $N^{7/5}$ behavior for bosons is proved.

\item For $U>2\alpha$ the thermodynamic limit, $\lim_{N\to\infty} E^{(N)}_U(\alpha)/N$, exists
\cite{FLST}. More importantly, it was shown in \cite{FLST} that there is a value $U^*(\alpha)$
such that $E^{(N)}_U(\alpha)=N E^{(1)}(\alpha)$ for all $U\geq U^*(\alpha)$
and all $N$. (For $N=1$, we drop the index $U$ in the notation
$E^{(N)}_U(\alpha)$.) In other words, there is no binding of any kind for $U>
U^*(\alpha)$.

\item There is, in fact, binding for some range of $\alpha$, $U$ and $N$ with
$U>2\alpha$. (If $N=2$ and $\alpha$ is large then binding occurs for $2\alpha<U
\lesssim 2.3\, \alpha$ \cite{SuMo,fomin,VSPD}.)
\end{enumerate}

For every fixed $N$ and $\alpha$ there will generally be certain critical
values of $U$, which we generically denote by $U_c$, at which binding will
either appear or disappear. That is,
$$
E^{(N)}_U(\alpha)=\widetilde E^{(N)}_{U_c}(\alpha)
$$
and
$$
E^{(N)}_U(\alpha) < \widetilde E^{(N)}_U(\alpha)
$$
for either all $U\in (U_c-\delta,U_c)$ or all $U\in (U_c,U_c+\delta)$,
for some $\delta>0$. The usual case is that there is a unique $U_c$
such that binding occurs for $U<U_c$ and no binding occurs for
$U>U_c$, but we do not know if this is true and we certainly do not
know how to prove it. The problem is that the energies of the break-up
components also depend on $U$, so there is no obvious monotonicity.

The question we address here is what happens at $U_c$, or as $U$ approaches
$U_c$ from the binding side. Does the compound simply explode into two pieces
like a supernova or does it get larger and larger like a red giant? Our
answer is that it always explodes, that is, the maximal distance between the
particles stays finite. For the bipolaron this was seen variationally in
\cite{VPD} and stated as a general mathematical question in \cite{spohn}.

One would like to say that this implies the existence of an eigenfunction when
$U=U_c$, but the polaron problem presumably never has bound states because of
translation invariance \cite{GeLo1}. (No doubt there are bound states of total
momentum equal to zero, but we will not explore this here.) The PT problem, on
the other hand, does have honest energy minimizers in a one-sided neighborhood
of $U_c$ \cite{Li,lewin}, and we use the result about the finite maximal
distance to show, in a separate theorem, that the PT problem has a minimizer
when $U=U_c$.

In any case, a first order phase transition occurs. 
Suppose there is
a break-up into two complexes of $n$ and $N-n$ particles each. Suppose particle
one goes into the first complex and particle two goes into the second. Then
their mutual Coulomb repulsion is zero after the break-up, but it is uniformly
(in $U$) bounded away from zero before the break-up.

While in a physical situation one is usually not at the critical value of the
parameters, the results proved here can be useful for a general understanding
of quantum mechanics and, perhaps, for numerical analysis close to the critical
values.

%%%%%%%%%%%%%%%%%%%%%%%%%%%%%%%%%%%%%%%%%%%%%%%%%%

\section{The helium problem}

As a warm-up, we consider the ground state of a two-electron atom, of which
helium and the hydrogen ion are examples. The Hamiltonian
$$
H_U = \sum_{i=1}^2 \left(p_i^2 - |x_i|^{-1} \right) + \frac U{|x_1-x_2|}
$$
acts on $L^2(\R^6)$. The ground state energy $E_U$ of $H_U$ is
monotone increasing with respect to $U$, and there is a critical $U_c>0$ such
that for $U<U_c$, $E_U$ is lower than
$$
-\frac14 = \infspec \left( p^2 - |x|^{-1} \right) \,,
$$
which is the energy of a hydrogen atom, and for $U\geq U_c$ they are equal. Our
general method, which will later be applied to the polaron problem, is
illustrated by the following theorem of M. and T. Hoffmann-Ostenhof and Simon
\cite{HoHoSi}.

\begin{Theorem}[\textbf{Binding for helium
\cite{HoHoSi}}]\label{hohosi}
 $H_{U_c}$ has a ground state eigenfunction $\psi_{U_c}$ in $L^2(\R^6)$, i.e.,
$H_{U_c}\psi_{U_c} = -\frac14\psi_{U_c}$.
\end{Theorem}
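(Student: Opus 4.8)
The plan is to take a minimizing sequence $\psi_n$ for $H_{U_c}$ and show it does not disperse, so that a subsequence converges to a genuine eigenfunction at energy $-\frac14$. The obstruction to compactness is mass running off to infinity; there are essentially two channels: either one electron escapes while the other stays bound to the nucleus (leaving behind a hydrogen atom at energy $-\frac14$), or both electrons escape together. The second channel is excluded because two electrons far from the nucleus, even helped by the nuclear attraction, cannot have energy $\le -\frac14$ for $U_c>0$ — indeed their relative Coulomb repulsion contributes a definite positive amount. So the only way a minimizing sequence can fail to converge is by splitting off one hydrogen atom, and the heart of the matter is to rule this out by an energy estimate, using precisely the dichotomy between the $1/r$ decay of the Coulomb repulsion and the $1/r^2$ decay of localization (kinetic) energies.

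First I would set up a partition of unity $\chi_R^2 + \eta_R^2 = 1$ on $\R^3$ (a single-particle cutoff at radius $R$, applied in, say, the $x_2$ variable after symmetrizing, or more symmetrically using an $N$-body IMS-type decomposition), with $|\nabla\chi_R|, |\nabla\eta_R| = O(1/R)$, so that the localization error in the kinetic energy is $O(1/R^2)$. Splitting $\psi$ according to whether $|x_1-x_2|$ (or the distance of the outermost electron from the origin) is $\lesssim R$ or $\gtrsim R$: on the ``inner'' piece one has a trial state for the full $H_{U_c}$; on the ``outer'' piece, where the two electrons are separated by distance $\ge R$, the Coulomb term $U_c/|x_1-x_2|$ is bounded \emph{below} by $U_c/(CR)$ — a \emph{positive} quantity of order $1/R$ — whereas the localization cost is only of order $1/R^2$. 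Hence for $R$ large the outer piece has energy strictly above the threshold $-\frac14$ by an amount $\gtrsim 1/R$, which beats the $-C/R^2$ localization error. This forces a minimizing sequence to have its mass concentrated at bounded separation.

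Next, from ``bounded separation of the two electrons'' together with the known strict binding inequality $E_{U} < -\tfrac14$ for $U<U_c$ and continuity/monotonicity of $E_U$ (so $E_{U_c} = -\tfrac14$), I would upgrade to genuine $L^2$-compactness: since the pair cannot separate, and the center of mass of the pair cannot escape to infinity either (otherwise the nuclear attraction $-|x_i|^{-1}$ vanishes and one is left with two free electrons plus positive repulsion, energy $\ge 0 > -\tfrac14$), the minimizing sequence is tight. A standard weak-limit argument then produces a nonzero $\psi_{U_c}\in L^2(\R^6)$ with $\langle\psi_{U_c}|H_{U_c}|\psi_{U_c}\rangle = -\tfrac14 \|\psi_{U_c}\|^2$, i.e.\ a ground state eigenfunction, $H_{U_c}\psi_{U_c} = -\tfrac14\psi_{U_c}$.

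The main obstacle is making the energy bookkeeping in the localization step clean: one must check that the cross terms and the $O(1/R^2)$ errors are genuinely negligible against the $\Theta(1/R)$ gain from the Coulomb repulsion on the outer region, and that the ``inner vs.\ outer'' splitting is compatible with the two-electron structure (handling the case where only \emph{one} electron is far out, which is the physically relevant dissociation into H $+$ e$^-$). This is exactly the place where the hypothesis ``net repulsion decays slower than $r^{-2}$'' is used, and where one must be careful that after removing one electron to infinity the leftover single-electron energy is exactly $-\tfrac14$ (no better), so that the threshold is correctly identified; monotonicity of $U\mapsto E_U$ and the definition of $U_c$ take care of this. Everything else is soft: partition of unity, weak lower semicontinuity, and the elementary bound $E\ge 0$ for two electrons far from the nucleus.
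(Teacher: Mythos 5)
There is a genuine gap, and it sits at the heart of the matter. Your plan is to take a minimizing sequence for $H_{U_c}$ itself and show it cannot disperse, arguing that on the outer region the Coulomb repulsion produces an energy excess of order $1/R$ which beats the $O(1/R^2)$ localization error, and that this ``forces a minimizing sequence to have its mass concentrated at bounded separation.'' That conclusion does not follow: at $U=U_c$ one has $E_{U_c}=-\frac14=\inf\sigma_{\mathrm{ess}}(H_{U_c})$, so an excess of order $1/R$ tends to zero and does not prevent dispersion. Concretely, $\psi_R=\phi_{\mathrm H}(x_1)\,g_R(x_2)$ with $\phi_{\mathrm H}$ the hydrogen ground state and $g_R$ a wave packet at distance $R$ from the origin with width of order $R$ has energy $-\frac14+O(1/R)\to-\frac14$; it is a minimizing sequence whose weak limit is zero. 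Hence no soft tightness argument applied to arbitrary minimizing sequences at $U_c$ can work. The paper's route is different: for $U<U_c$ the ground state $\psi_U$ exists (since $E_U<-\frac14$), and one proves a lower bound on $\langle\psi_U|\,|x|_\infty^{-1}\,|\psi_U\rangle$ that is \emph{uniform in} $U$ as $U\to U_c$ (Proposition \ref{heliumradius}); the eigenfunction at $U_c$ is then the nonzero weak limit of these sub-critical ground states. The uniformity is obtained not from one cutoff $R$, but from the operator inequality \eqref{eq:opineq}, valid for \emph{all} radii $\ell\geq\ell_0$, evaluated in the exact ground state (so that $E_U\leq-\frac14$ gives the outer barrier a sign), followed by the Calculus Lemma \ref{calcu0}, which converts this one-parameter family of inequalities (``$1/\ell^2$ inside versus $1/\ell$ outside, for every $\ell$'') into an $\ell$-independent lower bound on $\langle |x|_\infty^{-1}\rangle$. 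That step is absent from your argument and is precisely what replaces the false tightness claim.

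A second, more local error: in the dissociation region (one electron far out, the other bound), your bookkeeping counts only the repulsion $U_c/|x_1-x_2|\gtrsim U_c/R$ and ignores the attraction $-1/|x_{\mathrm{far}}|$ of the escaping electron to the nucleus, which is of the same order $1/R$. The net long-range tail is roughly $(U_c-1)/|x|_\infty$, so its positivity is not automatic for $U_c>0$: it requires $U_c>1$, a nontrivial input (Bethe's bound \cite{Be}), which the paper uses through the assumption $U\geq1+\delta$ and the choice $U(1-2\epsilon)-1-\epsilon\geq\delta/2$ in the partition-of-unity estimate (with separate treatment of the region where both electrons are far from the nucleus but comparably placed). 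Without $U_c>1$ the tail could be attractive and the mechanism you describe would simply fail.
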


\begin{remark}
 The proof uses only two facts about the operators $p^2$ and
$|x|^{-1}$:
First, $p^2$ is non-negative and, second, the localization
errors for $p^2$
fall-off faster than $|x|^{-1}$. The proof would work for many other
combinations having this essential relationship. For instance, as mentioned in
the introduction, $p^2$ could be replaced by $(p+A(x))^2$ for a suitable magnetic vector
potential $A(x)$.
\end{remark}

Theorem \ref{hohosi} follows by standard weak convergence
arguments from the following Proposition~\ref{heliumradius}.
The
details of the arguments, showing that the $H^1(\R^6)$-weak limit of
ground states $\psi_U$ for $U\to U_c$ is not zero, are worked out in
Section \ref{sec:exopt} on the PT equation, which is more complicated,
in fact.  Once one knows that the weak limit is not zero it is easy to
see that this weak limit has to be a ground state.

We know from \cite{Be} that $U_c>1$, and therefore for the
proof of Theorem \ref{hohosi} the assumption $U\geq
1+\delta$ in the following proposition is
not really a restriction. The main point is the uniformity
as $U\to U_c$.

\begin{proposition}[\textbf{Upper bound on the helium radius}]
\label{heliumradius}
For any $\delta>0$, there is a constant $C_\delta>0$ such that for all
$1+\delta\leq U < U_c$ and for all normalized ground states $\psi_U$ of
$H_{U}$ one has
\begin{equation}
 \label{eq:heliumradius}
\langle \psi_U |\, |x|_\infty^{-1}\, |  \psi_U \rangle \geq C_\delta \,,
\end{equation}
where $|x|_\infty = \max\{|x_1|,|x_2|\}$.
\end{proposition}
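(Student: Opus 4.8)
My plan is to argue by contradiction. Suppose the conclusion fails: then there is a $\delta>0$, a sequence $U_k \to U_\infty \in [1+\delta, U_c]$, and normalized ground states $\psi_k$ of $H_{U_k}$ with $\langle \psi_k |\, |x|_\infty^{-1}\, | \psi_k\rangle \to 0$. Since $\psi_k$ minimizes $H_{U_k}$ with $E_{U_k} \le -\tfrac14$, and $E_{U}$ is continuous (monotone, bounded), the $\psi_k$ are bounded in $H^1(\R^6)$. The assumption $\langle \psi_k |\, |x|_\infty^{-1}\, | \psi_k\rangle \to 0$ says the pair of electrons is, on average, very far from the nucleus; I want to turn this into a strict \emph{lower} bound on $\langle \psi_k | H_{U_k} | \psi_k\rangle$ that beats $-\tfrac14$, giving a contradiction.

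The central estimate is a localization (IMS-type) argument that compares the true energy to the sum of one-electron energies. Introduce a smooth partition of unity $\{\chi_0,\chi_1\}$ on $\R^3$ depending on a length scale $R$: $\chi_0$ supported in $\{|x|\le 2R\}$, $\chi_1$ supported in $\{|x|\ge R\}$, with $\chi_0^2+\chi_1^2=1$ and $|\nabla\chi_j|\le c/R$. Applied to each electron coordinate this splits $L^2(\R^6)$ into regions labeled by whether each $x_i$ is ``near'' or ``far''; the IMS formula gives $H_U \ge \sum_{\sigma} J_\sigma H_U J_\sigma - cR^{-2}$ where $J_\sigma$ is the product localization. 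On each piece I bound the energy from below: any electron in the ``near'' region contributes at least $\infspec(p^2-|x|^{-1}) = -\tfrac14$; any electron in the ``far'' region $\{|x_i|\ge R\}$ contributes $p_i^2 - |x_i|^{-1} \ge -1/R$ (dropping $p_i^2\ge0$), and the electron--electron term $U/|x_1-x_2| \ge 0$ is simply discarded. Crucially, the configuration where \emph{both} electrons are near has energy at least $2\cdot(-\tfrac14)-cR^{-2}$ — but this is the only dangerous piece, and its weight is controlled: if both $|x_i|\le 2R$ then $|x|_\infty^{-1} \ge (2R)^{-1}$, so the probability of this event is at most $2R\,\langle\psi_k|\,|x|_\infty^{-1}\,|\psi_k\rangle =: 2R\,\epsilon_k \to 0$. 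Putting the pieces together,
\begin{equation}
E_{U_k} = \langle\psi_k|H_{U_k}|\psi_k\rangle \ge -\tfrac14(1 - \text{(weight of some-electron-far config)}) - \tfrac14\cdot 2R\epsilon_k - \tfrac1R - \frac{c}{R^2},
\end{equation}
and the ``some electron far'' configurations have energy bounded below by $\max\{-\tfrac14, -1/R\}\cdot(\text{something}) + (\text{at least one }{-1/R}\text{ term})$, which one arranges so that the net right-hand side exceeds $-\tfrac14$ once $R$ is chosen large and then $\epsilon_k$ is small. More precisely: the single-electron ground state energy $-\tfrac14$ is attained only when that electron sits on the nucleus, so for the two-electron system to reach $-\tfrac14$ both electrons must be near the nucleus, and then the Coulomb repulsion $U/|x_1-x_2| \ge U/(4R)$ on that piece is bounded below away from $0$ uniformly in $k$ (using $U\ge 1+\delta$). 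That positive repulsive contribution on the dominant-weight piece, weighed against the $O(R^{-2})$ localization error and the negligible $O(\epsilon_k)$-weight pieces, forces $E_{U_k} > -\tfrac14$ for $k$ large, contradicting $E_{U_k}\le -\tfrac14$.

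The main obstacle is making the bookkeeping on the ``dangerous'' piece precise: on the region where both electrons are near the nucleus I cannot simultaneously use $-\tfrac14$ for each electron \emph{and} keep the repulsion, since the $-\tfrac14$ bound is saturated only at coincidence with the nucleus while the repulsion is large there. The fix is to quantify the deficit — a stability/coercivity estimate of the form $\langle u | (p^2-|x|^{-1}+\tfrac14) | u\rangle \ge \kappa \langle u |\, f(x)\, | u\rangle$ for a positive $f$ comparable to $|x|^{-1}$ near the origin — so that the energy gained above $-2\cdot\tfrac14$ on the near-near piece is at least a fixed multiple of $\langle |x_1|^{-1} + |x_2|^{-1}\rangle$ there, which dominates $\langle U/|x_1-x_2|\rangle$... no: rather, I use that on the near-near piece the repulsion $U/|x_1-x_2|$ \emph{added} to $-\tfrac14$ still beats $-\tfrac14$ unless the wavefunction concentrates both electrons at the origin, which the kinetic terms forbid. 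Carrying this through cleanly — choosing $R$ and the localization functions so all error terms are beaten — is the only delicate point; everything else is the standard IMS machinery together with the elementary bound $|x_i|^{-1}\le 1/R$ outside a ball.
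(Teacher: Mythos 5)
Your sketch has two genuine gaps, and the second one is fatal to the contradiction strategy as you set it up. First, you misidentify the dangerous region. Under your hypothesis $\langle\psi_k|\,|x|_\infty^{-1}\,|\psi_k\rangle\to 0$, the ``near--near'' piece indeed has vanishing weight, so nothing is gained or lost there; the piece that carries essentially all the mass is ``one electron near the nucleus, one far''. On that piece you discard the repulsion $U/|x_1-x_2|$ and bound the far electron by $p_2^2-|x_2|^{-1}\geq -1/R$, which gives a lower bound $\geq -\tfrac14-\tfrac1R-O(R^{-2})$, i.e.\ \emph{below} $-\tfrac14$; no choice of $R$ can then contradict $E_{U_k}\leq-\tfrac14$. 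The only source of a positive barrier is to keep the repulsion on exactly this piece and use $U\geq 1+\delta$: since $|x_1-x_2|\leq(1+o(1))|x|_\infty$ and $|x_2|\geq(1-o(1))|x|_\infty$ there (after a partition whose regions are defined by the \emph{ratios} $|x_1|/|x_2|$ and $|x|_\infty/|x_1-x_2|$, not by a fixed cutoff $2R$ on each $|x_i|$, otherwise the constants do not work for $U$ close to $1+\delta$), the outer potential is $\geq -\tfrac14+c_\delta/|x|_\infty$. Incidentally, the remark that ``$-\tfrac14$ is attained only when that electron sits on the nucleus'' is not correct and plays no role in a correct argument.

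Second, even with the correct barrier, a single-scale argument (fix $R$ large, then let $\epsilon_k\to0$) cannot produce a contradiction: the gain from the barrier is $c_\delta\,\langle|x|_\infty^{-1}\rangle$ restricted to the outer region, which tends to zero under your own hypothesis, while the localization error $-CR^{-2}$ is fixed. The one-scale inequality
\begin{equation*}
\frac{C}{R^2}\,\PP\big(|x|_\infty<R\big)\ \geq\ \big(-\tfrac14-E_{U_k}\big)\,\PP\big(|x|_\infty>R\big)+c_\delta\,\big\langle |x|_\infty^{-1}\,\theta(|x|_\infty-R)\big\rangle
\end{equation*}
is perfectly consistent with $\langle|x|_\infty^{-1}\rangle\to0$ (for instance a radial density with mass $\eta$ inside radius $R$ and mass $1-\eta$ near radius $\eta^{-2}$ satisfies it for small $\eta$), so ``choose $R$ large, then $k$ large'' ends only with $E_{U_k}\to-\tfrac14$, not with a contradiction. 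What the paper does instead is prove the localization inequality for \emph{every} scale $\ell\geq\ell_0$, uniformly in $U\geq1+\delta$, and then invoke a calculus lemma (Lemma \ref{calcu0}) which exploits precisely that the allowed leakage decays like $\ell^{-2}$ while the barrier decays only like $r^{-1}$; this multi-scale step is what converts the family of inequalities into the uniform bound \eqref{eq:heliumradius}. Your proposal contains no substitute for it, so even after repairing the bookkeeping on the near--far region the argument would not close.
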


\begin{proof}
 Our goal will be to prove the
following operator inequality. (Actually, the analogous statement for
expectation values in the ground state would suffice for our purposes.) There is
a radius $\ell_0>0$ and constants $c,C>0$ such that for all larger radii
$\ell\geq\ell_0$ and for all $U\geq 1+\delta$ one has
\begin{equation}
 \label{eq:opineq}
H_U -E_U \geq - \frac{C}{\ell^2} \ \theta(\ell-|x|_\infty) 
+ \left(-\frac14-E_U + \frac{c}{|x|_\infty}\right)
\theta(|x|_\infty-\ell) \,.
\end{equation}
Here, $\theta$ is the Heaviside function, i.e., $\theta(t)=1$ if $t>0$ and
$\theta(t)=0$ if $t<0$.

Since $E_U\leq -\frac14$, this inequality, evaluated in a ground state $\psi_U$,
implies
$$
\frac{C}{\ell^2} \int_{\{|x|_\infty <\ell\}} \psi_U^2 \,dx \geq
c \int_{\{|x|_\infty >\ell\}} \frac{\psi_U^2}{|x|_\infty} \,dx 
\qquad
\text{for all}\
\ell\geq \ell_0 \,.
$$
Lemma \ref{calcu0}, stated below,
implies \eqref{eq:heliumradius}, and our proof is complete.

The crucial inequality \eqref{eq:opineq} has a simple physical interpretation.
The first term is a slightly negative energy, but vanishes relatively rapidly
like $\ell^{-2}$. This leaves us with the second term, which is like a potential
barrier. Although it goes to zero as $\ell\to \infty$, it does so much slower than $\ell^{-2}$ `on
average'.  The significance of this is quantified by the following lemma.

\begin{Lemma}[\textbf{Calculus lemma -- simple version}]\label{calcu0}
Let $\rho \in L^1(\R_+)$ be non-negative, with $\int_0^\infty \rho(r) dr = 1$.
Assume that there are constants $b>0$ and $\ell_c\geq 0$ such that
\begin{equation}\label{asu0}
\frac b{\ell^2} \int_0^\ell \rho(r) dr \geq
\int_\ell^\infty \frac 1 r \rho(r) dr 
\end{equation}
for all $\ell \geq \ell_c$. Then 
\begin{equation}\label{cts0}
\int_0^\infty \frac 1 r \rho(r) dr \geq \frac{1}{2(b+\ell_c)}
\end{equation}
and
\begin{equation}
 \label{eq:massinside}
\int_0^{\ell_c} \rho(r) dr \geq \frac{\ell_c^2}{(b+\ell_c)^2} \,.
\end{equation}
\end{Lemma}

A generalization of this lemma is given later in Lemma \ref{calcu}.

\begin{proof}[Proof of Lemma \ref{calcu0}]
We start by showing that
\begin{equation}
 \label{eq:calcbasic}
\int_{\ell_c}^\infty \frac 1 r \rho(r) dr \geq \frac{1}{b+\ell_c} -
\frac1{\ell_c} \int_0^{\ell_c} \rho(r) dr \,.
\end{equation}
To prove this, we define $f(R)=\int_0^R \rho(r)\,dr$ and rewrite the assumption
as $ b \ell^{-2} f(\ell) \geq \int_\ell^\infty r^{-1} \rho(r) dr$ for all
$\ell\geq\ell_c$. Therefore
\begin{align*}
 \int_{\ell_c}^\infty \frac 1 r \rho(r) dr 
& = \int_{\ell_c}^\infty \frac 1 r f'(r) dr 
= \int_{\ell_c}^\infty \frac 1{r^2} f(r) dr -\frac{f(\ell_c)}{\ell_c} \\
& \geq b^{-1} \int_{\ell_c}^\infty \int_r^\infty \frac 1 s \rho(s) ds \,dr
-\frac{f(\ell_c)}{\ell_c} \\
& = b^{-1} \int_{\ell_c}^\infty \frac {s-\ell_c} s \rho(s) ds
-\frac{f(\ell_c)}{\ell_c}\\
& = b^{-1} \left( 1-f(\ell_c) -\ell_c \int_{\ell_c}^\infty \frac
1 s \rho(s) ds \right) -\frac{f(\ell_c)}{\ell_c}\,.
\end{align*}
This is the same as \eqref{eq:calcbasic}.

In order to derive \eqref{cts0} from \eqref{eq:calcbasic} we distinguish two
cases according to whether $\ell_c^{-1} \int_0^{\ell_c} \rho(r) dr \leq
(2(b+\ell_c))^{-1}$ or not. In the first case,
\eqref{eq:calcbasic} yields
$$
\int_0^\infty \frac 1 r \rho(r) dr \geq \int_{\ell_c}^\infty \frac 1 r \rho(r)
dr 
\geq \frac{1}{2(b+\ell_c)} \,.
$$
In the opposite case, we have
$$
\int_0^\infty \frac 1 r \rho(r) dr \geq \int_0^{\ell_c} \frac 1 r \rho(r)
dr \geq \frac{1}{\ell_c} \int_0^{\ell_c} \rho(r)dr
\geq \frac{1}{2(b+\ell_c)} \,.
$$
This proves \eqref{cts0}.

Finally, to prove \eqref{eq:massinside} we combine the assumption
\eqref{asu0} at $\ell=\ell_c$ with \eqref{eq:calcbasic} to get
$$
\frac b{\ell_c^2} \int_0^{\ell_c} \rho(r) dr \geq
\int_{\ell_c}^\infty \frac 1 r \rho(r) dr
\geq \frac{1}{b+\ell_c} -
\frac1{\ell_c} \int_0^{\ell_c} \rho(r) dr \,,
$$
which is the same as \eqref{eq:massinside}.
\end{proof}

We now turn to the proof of the potential-theoretic
inequality \eqref{eq:opineq}.

\begin{Lemma}[\textbf{Partition of unity}]\label{pou}
 For any $0<\epsilon<1/2$ there is a $C_\epsilon>0$ such that the following
holds for any $\ell>0$. There is a quadratic partition of unity
$$
\chi_0(x)^2+\chi_1(x)^2+\chi_2(x)^2+\chi_3(x)^2 =1
\qquad\text{for all}\ x=(x_1,x_2)\in\R^3\times\R^3\,,
$$
such that
\begin{align*}
 & \chi_0(x)=0 \ \text{unless}\quad |x|_\infty\leq \ell\,,\\
 & \chi_1(x)=0 \ \text{unless}\quad |x|_\infty\geq \ell/2\,,\ |x|_\infty \leq
(1-\epsilon)|x_1-x_2|\,,\\
 & \chi_2(x)=0 \ \text{unless}\quad |x|_\infty\geq \ell/2\,,\ |x|_\infty \geq
(1-2\epsilon)|x_1-x_2|\,,\  |x_1|\leq (1+\epsilon)|x_2|\,,\\
 & \chi_3(x)=0 \ \text{unless}\quad |x|_\infty\geq \ell/2\,,\ |x|_\infty \geq
(1-2\epsilon)|x_1-x_2|\,,\  |x_2|\leq (1+\epsilon)|x_1|\,,
\end{align*}
and
\begin{align*}
 \sum_j |\nabla\chi_j|^2 & \leq \frac{C_\epsilon}{\ell^2}
\qquad\textrm{when}\quad \chi_0(x)^2>0 \,, \\
\sum_j |\nabla\chi_j|^2 & \leq \frac{C_\epsilon}{\ell\, |x|_\infty}
\qquad\text{when}\quad \chi_1(x)^2+\chi_2(x)^2+\chi_3(x)^2>0 \,.
\end{align*}
\end{Lemma}

One can choose $\chi_0$ and $\chi_1$ to be symmetric functions of $(x_1,x_2)$
and one can choose $\chi_2(x_1,x_2)=\chi_3(x_2,x_1)$. This preserves both Bose
and Fermi statistics.

\begin{proof}[Sketch of proof of Lemma \ref{pou}]
 The localization error $\sum_j |\nabla\chi_j|^2$ is supported in three regions,
 $\{\ell/2 \leq |x|_\infty\leq \ell\}$, $\{(1-2\epsilon)|x_1-x_2|\leq |x|_\infty
\leq (1-\epsilon)|x_1-x_2|\}$ and $\{ (1+\epsilon)^{-1} |x_2| \leq |x_1|\leq
(1+\epsilon)|x_2| \}$. It is a geometric question to check the stated
inequalities on the supports of the different $\chi$'s. We leave this to the
reader. The fact that $C_\epsilon$ is independent of $\ell$ is a consequence of
scaling.
\end{proof}

\begin{proof}[Proof of inequality \eqref{eq:opineq}]
 By Lemma \ref{pou} and the IMS localization formula we can write, for any wave
function $\Psi$,
\begin{equation*}
\langle \Psi | H_U |\Psi \rangle  
= \sum_{j=0}^3 \left\langle \Psi_j \left| H_U - \sum_{k=0}^3 \left|
\nabla \chi_k \right|^2 \right|  \Psi_j \right\rangle
 =: \sum_{j= 0}^3 e_j \|\Psi_j\|^2
\end{equation*}
with $\Psi_j(x_1,x_2) = \Psi(x_1,x_2) \chi_j(x_1,x_2)$ and with numbers
$e_j$ (depending on $\Psi_j$). 
In the following, we shall derive lower bounds on $e_j$. 

For $j=0$, we simply use the bound on the localization error from Lemma
\ref{pou} to bound
$$
e_0 \geq E_U - \frac{C_\epsilon}{\ell^2} \,.
$$
For $j=1$, we use the fact that on the support of $\chi_1$, if, say,
$|x_2|=|x|_\infty$ then
$$
|x_1| \geq \frac\epsilon{1-\epsilon} |x_2| = \frac\epsilon{1-\epsilon}
|x|_\infty \,.
$$
Therefore, for any $\lambda\geq0$,
$$
-|x_1|^{-1} - |x_2|^{-1} \geq -\epsilon^{-1} |x|_\infty^{-1} \geq -
2(\epsilon^{-1}+\lambda)\ell^{-1}+ \lambda |x|_\infty^{-1}
$$
and
$$
e_1 \geq - 2(\epsilon^{-1}+\lambda)\ell^{-1} + \|\Psi_1\|^{-2} \left\langle
\Psi_1 \left| \frac{\lambda -C_\epsilon \ell^{-1}}{|x|_\infty}
\right|\Psi_1\right\rangle \,.
$$
For $j=2$ we bound $p_1^2-|x_1|^{-1} \geq -\frac14$. Moreover, in that region
$$
|x_2| \geq (1+\epsilon)^{-1} |x|_\infty 
\qquad\text{and}\qquad
|x_1-x_2| \leq (1-2\epsilon)^{-1} |x|_\infty \,,
$$
so that
$$
e_2 \geq -\frac14 + \|\Psi_2\|^{-2} \left\langle \Psi_2 \left| 
\frac{U(1-2\epsilon)-1-\epsilon-C_\epsilon \ell^{-1}}{|x|_\infty}
\right|\Psi_2\right\rangle \,.
$$
Similarly, for $j=3$,
$$
e_3 \geq -\frac14 + \|\Psi_3\|^{-2} \left\langle \Psi_3 \left| 
\frac{U(1-2\epsilon)-1-\epsilon-C_\epsilon \ell^{-1}}{|x|_\infty}
\right|\Psi_3\right\rangle \,.
$$

In order to specify the free parameters $\epsilon$, $\ell$ and $\lambda$ we
restrict ourselves to $U\geq 1+\delta$ with some fixed $\delta>0$. (We will not
reflect the dependence on $\delta$ in our notation.) First, we choose $\epsilon$
so small that $U(1-2\epsilon)-1-\epsilon \geq \delta/2$ and we choose
$\lambda=\delta/2$. Then we choose $\ell_0$ so large that
$2(\epsilon^{-1}+\lambda)\ell_0^{-1}\leq 1/4$ and $C_\epsilon
\ell_0^{-1}\leq\delta/4$. With these choices, we have
$$
e_j \geq -\frac14 + \|\Psi_j\|^{-2} \left\langle \Psi_j \left| 
\frac{\delta/4}{|x|_\infty} \right|\Psi_j\right\rangle
$$
for any $\ell\geq\ell_0$ and any $j=1,2,3$. This, together with the fact that
$$
\chi_0(x)^2 \leq \theta(\ell-|x|_\infty)
$$
implies \eqref{eq:opineq}.
\end{proof}

This completes our proof of Proposition \ref{heliumradius}.
\end{proof}

\begin{remark}
 One can ask a similar question for $N$ electrons and whether there is binding
at the (unique) $U=U_c$ such that
\begin{equation}
 \label{eq:ucritatom}
E^{(N)}_{U_c} = E^{(N-1)}_{U_c} \,.
\end{equation}
(Here $E^{(n)}_U$ is the ground state energy of $n$ electrons with repulsion
strength $U$.) We denote by $M$ the smallest number $n\leq N-1$ such that
$E^{(N)}_{U_c} = E^{(n)}_{U_c}$ and we recall that by Zhislin's theorem
\cite{Zh} we have $U_c\geq M^{-1}$. Whether equality or strict inequality holds
has to be
decided by an independent variational calculation (as Bethe \cite{Be} did for
helium). We have nothing to say about the case $U_c= M^{-1}$. On the other hand,
if $U_c> M^{-1}$ our method should work and show that there
is a bound state for the $N$-electron system. Again, the
method should extend to magnetic fields, pseudo-relativistic
models with positive mass, spin-polarized systems, etc.
\end{remark}

Another area in which to try to utilize our method is Hartree or Hartree-Fock
theory, though we have not pursued this.

%%%%%%%%%%%%%%%%%%%%%%%%%%%%%%%%%%%%%%%%%%%%%%%%%%%%%%%%%%%%%%%%%%%%%%%%%

\section{The Bipolaron}

We now return to our main theme and consider the simplest interesting case,
namely two polarons, whose Hamiltonian is 
\begin{equation}
 \label{eq:ham}
H_U^{(2)} = p_1^2 + p_2^2 - \sqrt\alpha \phi(x_1) - \sqrt\alpha \phi(x_2)  +H_f + \frac{U}{|x_1-x_2|} \,.
\end{equation}
We have shown in \cite{FLST} that there is a critical $U_c(\alpha)
< \infty$ such that $E^{(2)}_U(\alpha) = 2 E^{(1)}(\alpha)$ for all
$U\geq U_c(\alpha)$, while $E_U^{(2)}(\alpha)< 2 E^{(1)}(\alpha)$ for
all $U<U_c(\alpha)$ (by concavity). It is easy to see that
$U_c(\alpha)\geq 2 \alpha$. In the following, we {\em assume} that
$U_c(\alpha) > 2 \alpha$. This is true for large $\alpha$, since
$\liminf_{\alpha\to \infty} U_c(\alpha)/\alpha \gtrsim 2.3$ due to the
convergence to the Pekar-Tomasevich functional in the strong coupling
limit \cite{DoVa,LiTh,spohn}, for which the critical ratio is known to be at least $\approx 2.3$ \cite{SuMo,fomin,VSPD}.

\begin{Theorem}[\textbf{Upper bound on the bipolaron radius}]\label{thm:binding}
For any $\epsilon>0$, there is a constant $C_\epsilon>0$ such that for all $0<
2\alpha(1+\epsilon) < U < U_c(\alpha)$ and for all states
$\Psi$  one has
\begin{equation}  \label{eq:bindabs}
\left\langle \Psi \left| \frac 1{|x_1-x_2|} \right| \Psi \right \rangle \geq
\frac { U - 2 \alpha(1+\epsilon)}{C_\epsilon ( 1 + U /\alpha)} \frac{
\left\langle \Psi \left| 2 E^{(1)}(\alpha) - H^{(2)}_U  \right| \Psi \right
\rangle}{2E^{(1)}(\alpha) - E_U^{(2)}(\alpha)} \,.
\end{equation}
\end{Theorem}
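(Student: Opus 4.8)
\emph{Proof proposal.} I would follow the template of the helium argument (Proposition~\ref{heliumradius}), with the inter-polaron distance $R:=|x_1-x_2|$ --- the natural length for the translation-invariant polaron --- in the r\^ole of $|x|_\infty$. Everything reduces to one estimate, a \emph{net-repulsion bound at large separation}, together with the trivial bound $H^{(2)}_U\ge E^{(2)}_U(\alpha)$. The net-repulsion bound is: there are $R_\ast=R_\ast(\epsilon,\alpha)$ and $c,C_\epsilon>0$ with $R_\ast^{-1}\ge\bigl(U-2\alpha(1+\epsilon)\bigr)/\bigl(C_\epsilon(1+U/\alpha)\bigr)$ for all $U>2\alpha(1+\epsilon)$, such that for every $\Psi\in L^2(\R^6)\otimes\F$ supported in $\{R\ge R_\ast\}$,
\begin{equation*}
\bigl\langle\Psi\bigm|H^{(2)}_U\bigm|\Psi\bigr\rangle\ \ge\ \Bigl\langle\Psi\Bigm|2E^{(1)}(\alpha)+\tfrac{c\,(U-2\alpha(1+\epsilon))}{R}\Bigm|\Psi\Bigr\rangle\ \ge\ 2E^{(1)}(\alpha)\,\|\Psi\|^2 .
\end{equation*}
Granting this, take a quadratic partition of unity $\chi_0(R)^2+\chi_1(R)^2=1$ depending only on $R$, with $\supp\chi_1\subseteq\{R\ge R_\ast\}$ and $\supp\chi_0\subseteq\{R\le 2R_\ast\}$, and apply the IMS localization formula. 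Since the $\chi_j$ commute with $H_f$, $\phi(x_1)$, $\phi(x_2)$ and $U/R$, and since the localization error $\sum_j|\nabla\chi_j|^2$ is non-negative and may be discarded in a lower bound, the trivial bound on the $\chi_0$-piece and the net-repulsion bound on the $\chi_1$-piece give $H^{(2)}_U\ge E^{(2)}_U(\alpha)\chi_0^2+2E^{(1)}(\alpha)\chi_1^2=2E^{(1)}(\alpha)-\bigl(2E^{(1)}(\alpha)-E^{(2)}_U(\alpha)\bigr)\chi_0^2$ (using $E^{(2)}_U(\alpha)\le 2E^{(1)}(\alpha)$). Evaluating in an arbitrary normalized $\Psi$ --- writing $e=\langle\Psi|H^{(2)}_U|\Psi\rangle$, and noting \eqref{eq:bindabs} is trivial when $e\ge2E^{(1)}(\alpha)$ --- yields $\langle\Psi|\chi_0^2|\Psi\rangle\ge(2E^{(1)}(\alpha)-e)/(2E^{(1)}(\alpha)-E^{(2)}_U(\alpha))$, where $2E^{(1)}(\alpha)-E^{(2)}_U(\alpha)>0$ since $U<U_c(\alpha)$; and since $\chi_0^2\le 2R_\ast\,R^{-1}$ this turns into \eqref{eq:bindabs}, the factor $1+U/\alpha$ and the constant $C_\epsilon$ serving precisely to record the admissible size of $R_\ast$. (The break-up of two polarons is into two equal halves, so $R\to\infty$ is the only separation mode, which makes this last part simpler than in helium.)

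The whole difficulty is in the net-repulsion bound --- the polaron counterpart of the elementary inequality ``$-|x_1|^{-1}-|x_2|^{-1}\ge{}$barrier'' of the helium proof. Its content is that the phonon-mediated attraction between two electrons a distance $R$ apart is, to leading order, \emph{exactly} $-2\alpha/R$: completing the square in the field, $H_f-\sqrt\alpha\,\phi(x_1)-\sqrt\alpha\,\phi(x_2)=\int\bigl(a^\dagger(k)-\overline{h(k)}\bigr)\bigl(a(k)-h(k)\bigr)\,dk-\int|h(k)|^2\,dk$ with $h(k)=\tfrac{\sqrt\alpha}{\sqrt2\,\pi|k|}(e^{-ik\cdot x_1}+e^{-ik\cdot x_2})$, and $\int|h|^2=\tfrac{\alpha}{\pi^2}\int|k|^{-2}\,dk+\tfrac{\alpha}{\pi^2}\int|k|^{-2}\cos(k\cdot(x_1-x_2))\,dk=\text{(two divergent one-polaron self-energies)}+\tfrac{2\alpha}{R}$. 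Thus the attraction falls off like $r^{-1}$, exactly as fast as the Coulomb repulsion $+U/R$, which is why $U>2\alpha$ is the relevant threshold and why the hypothesis $2\alpha(1+\epsilon)<U$ is what is needed; the corrections to $-2\alpha/R$ decay faster than $r^{-1}$ and can be swallowed by $\epsilon$ once $R$ exceeds a suitable multiple of the polaron length divided by $\sqrt\epsilon$, which is where the stated lower bound on $R_\ast^{-1}$ originates.

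The obstacle --- and I expect this to be the crux --- is that the Fr\"ohlich Hamiltonian is ultraviolet-singular: $H_f-\sqrt\alpha\,\phi(x_i)$ is itself unbounded below, and the finiteness of $E^{(1)}(\alpha)$ rests on the interplay of $p_i^2$ with the $x_i$-dependence of the coupling; one therefore cannot just give each polaron an independent phonon cloud and read off $2E^{(1)}(\alpha)$. My approach would be to split the field at a momentum scale $\Lambda$: the low-momentum part $\{|k|<\Lambda\}$ is minimized out --- legitimate, since with an infrared cut-off the classical self-field lies in $L^2$ --- yielding the attraction $-2\alpha\bigl(1+O((\Lambda R)^{-1})\bigr)/R$ and two \emph{finite} cut-off self-energies, while the high-momentum part $\{|k|>\Lambda\}$, which is effectively short-ranged on the scale $1/\Lambda$ and hence can be split between the two polarons once $R\gg1/\Lambda$, is kept coupled to $p_1^2$ and $p_2^2$ --- the combination that the ultraviolet regularity of the one-polaron problem controls in terms of $E^{(1)}(\alpha)$. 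One then chooses $\Lambda\sim1/(\epsilon R)$ and $R_\ast$ so that every error --- the cut-off in the attraction, the splitting error for the short-range high-momentum field, and the mismatch between the cut-off and the full one-polaron energies (controlled via the polaron length) --- is dominated by $\epsilon\,\alpha/R$. Making this matching work cleanly and uniformly in $U$ and $\alpha$ is the main difficulty; a natural alternative is a unitary dressing (Weyl) transformation by the classical field of one polaron, after which one estimates the resulting momentum-type field operator. Everything downstream of the net-repulsion bound is the routine bookkeeping above.
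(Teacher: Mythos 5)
Your reduction rests on discarding the IMS localization error, and that step has the wrong sign: the IMS formula reads $H^{(2)}_U=\sum_j\chi_j H^{(2)}_U\chi_j-\sum_j|\nabla\chi_j|^2$, so in a \emph{lower} bound the error must be kept (subtracted), not dropped. Keeping it, your single-scale two-region argument gives at best $H^{(2)}_U\ge 2E^{(1)}(\alpha)-\bigl(2E^{(1)}(\alpha)-E^{(2)}_U(\alpha)+C R_\ast^{-2}\bigr)\chi_0^2+\ldots$, and since $2E^{(1)}(\alpha)-E^{(2)}_U(\alpha)\to0$ as $U\to U_c(\alpha)$ while $CR_\ast^{-2}$ does not, dividing through yields $\langle\Psi|\chi_0^2|\Psi\rangle\gtrsim\bigl(2E^{(1)}(\alpha)-e\bigr)/\bigl(CR_\ast^{-2}\bigr)$, which for a true ground state tends to zero as $U\to U_c(\alpha)$; enlarging $R_\ast$ with $U$ so as to absorb the error destroys the prefactor $1/(2R_\ast)$ instead. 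Thus a single localization scale cannot produce the uniformity in $U$ that is the entire point of \eqref{eq:bindabs}. The paper (and already the helium argument you invoke as template) circumvents exactly this by proving the barrier inequality for \emph{all} scales $\ell\ge\ell_c$ -- localization errors decay like $\ell^{-2}$, the barrier only like $\ell^{-1}$ -- and then feeding the whole family of inequalities into the calculus lemma (Lemma~\ref{calcu}) applied to the distribution of $|x_1-x_2|$, where the localization error enters only as an additive constant in the denominator, uniformly in $U$. This multi-scale step is not ``routine bookkeeping''; it is one of the two essential ingredients, and your proposal omits it.

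The other essential ingredient, your ``net-repulsion bound,'' is stated but not proved, and the sketch you give is not the paper's route and is incomplete precisely where you flag it. In the paper no such bound is assumed for arbitrary states supported in $\{|x_1-x_2|\ge R_\ast\}$: it is derived scale by scale, by a \emph{second}, continuous IMS localization of each electron into a ball of radius comparable to the separation ($b^jL$), after which Lemma~\ref{lem3} (proved in \cite{FLST} via the functional-integral representation) gives $\langle\Psi|H^{(2)}_U|\Psi\rangle\ge 2E^{(1)}(\alpha)-2\alpha/d+U/(d+4R)$ for states supported in two disjoint balls; the excess $(U-2\alpha(1+\epsilon))/|x_1-x_2|$ is kept explicitly, while the $\epsilon$-margin absorbs both the mismatch between $-2\alpha/d$ and $2\alpha(1+\epsilon)/(d+4R)$ and all localization errors, which fixes $\ell_c$ of order $1/\alpha$ (for fixed $\epsilon$) and hence the factor $1+U/\alpha$ in the constant. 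The ball localization is needed because $|x_1-x_2|\ge R_\ast$ alone gives no control on how spread out each electron is, so the completion-of-the-square estimate of the phonon attraction cannot be applied directly to such states; and your proposed ultraviolet treatment (momentum splitting at $\Lambda\sim 1/(\epsilon R)$, or a Weyl dressing) is exactly the part you concede is not carried out, whereas the path-integral proof of Lemma~\ref{lem3} sidesteps the ultraviolet difficulty altogether. As it stands, both pillars of the theorem -- the localized lower bound and the mechanism converting it into a $U$-uniform estimate -- are missing or incorrect in the proposal.
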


We emphasize two consequences of Theorem \ref{thm:binding}:

\begin{enumerate}
 \item In any approximate ground state, in the sense that 
\begin{equation}
 \left\langle \Psi \left| H^{(2)}_U  \right| \Psi \right\rangle \leq (1-\lambda)
2E^{(1)}(\alpha) + \lambda E_U^{(2)}(\alpha)
\end{equation}
for some $\lambda > 0$, the expectation value of $|x_1-x_2|^{-1}$ is
uniformly bounded from below by a positive number as $U$ increases to
$U_c(\alpha)$. In particular, \emph{the size of the bipolaron does not
increase indefinitely as $U \to U_c(\alpha)$}.
\item The bipolaron energy $E^{(2)}_U(\alpha)$ is
\emph{not differentiable in $U$ at $U = U_c(\alpha)$}. While the right
derivative is zero, the left derivative is at least as big as
$(U_c(\alpha)-2\alpha(1+\epsilon))/[C_\epsilon(1+ U_c(\alpha)/\alpha)]$, for any
$\epsilon > 0$.
\end{enumerate}

\begin{proof}
The proof of Theorem~\ref{thm:binding} will be divided into three parts.

\emph{Step 1. Partition of the interparticle distance.} As in
\cite{FLST} we choose a quadratic partition of unity and localize the particles according to their
relative distance.
In order to construct this partition, we pick some parameters $b>1$ and $\ell > 0$, and let
\begin{equation}\label{phi1}
\varphi(t) := \left\{ 
\begin{array}{ll} 
0 & \text{for $t\leq \ell/b$} \,, \\
\sin\frac \pi 2 \frac{t-\ell/b}{\ell-\ell/b} & \text{for $\ell/b\leq t \leq \ell $} \,,\\
\cos\frac \pi 2\frac{t-\ell}{b\ell-\ell} & \text{for $\ell\leq t\leq b\ell$} \,,\\
0 & \text{for $t\geq b\ell$} \,. 
\end{array} \right.
\end{equation}
For $j\geq 1$, let $\varphi_j(t) := \varphi(b^{1-j}t)$, and for $j=0$, let 
\begin{equation}\label{phi2}
\varphi_0(t) := \left\{ \begin{array}{ll} 1 & \text{for $t\leq \ell/b$} \,, \\ \cos\frac \pi 2 \frac{t-\ell/b}{\ell-\ell/b} & \text{for $\ell/b\leq t\leq \ell$} \,, \\ 0 & \text{for $t\geq \ell$}\,. \end{array}  \right.
\end{equation}
Then
\begin{equation}\label{phno}
\sum_{j\geq 0} \varphi_j(t)^2 = 1 \quad \text{for all $t\geq 0$} \,.
\end{equation}

Using the IMS localization formula, we can write, for any wave function $\Psi$,
\begin{equation}\label{2.4}
\langle \Psi | H^{(2)}_U |\Psi \rangle  
= \sum_{j\geq 0} \left\langle \Psi_j \left| H^{(2)}_U - 2 \sum_{k\geq 0} \left| \varphi_k'(|x_1-x_2|)\right|^2 \right|  \Psi_j \right\rangle
 =: \sum_{j\geq 0} e_j \|\Psi_j\|^2
\end{equation}
with $\Psi_j(x_1,x_2) = \Psi(x_1,x_2) \varphi_j(|x_1-x_2|)$ and with numbers $e_j$ (depending on $\Psi_j$). 
In the following, we shall derive lower bounds on $e_j$. 
For our bounds we shall use the fact that on the support of $\varphi_j(|x_1-x_2|)$, the localization error is dominated by
\begin{equation}\label{eq:locerror}
 \sum_{k\geq 0} \left| \varphi_k'(|x_1-x_2|)\right|^2 \leq \frac{\pi^2}{4(\ell-\ell/b)^2} \times \left\{ \begin{array}{ll} 1 & \text{if $j=0$} \,, \\ b^{2(1-j)} & \text{if $j\geq 1$} \,. \end{array}\right.
\end{equation}

\bigskip

\emph{Step 2. Estimate of localized energies.}
Because of \eqref{eq:locerror} we can simply bound 
\begin{equation}\label{eq:e0}
e_0 \geq E_U^{(2)}(\alpha) - \frac{\pi^2}{2(\ell-\ell/b)^2} \,.
\end{equation}
For $j\geq 1$, we further localize each of the two particles to its own ball of radius $b^j L$ for some parameter $L>0$. This will entail an additional localization error. Concretely, let 
\begin{equation}\label{defchi}
\chi(x) = \frac 1{\sqrt{2\pi} |x|}
\left\{ 
\begin{array}{ll} 
\sin(\pi|x|) & \text{for $|x|\leq 1$} \,, \\
0 & \text{for $|x|\geq 1$} \,,
\end{array}\right.
\end{equation}
and note that $\int dx \,\chi(x)^2 = 1$ and $\int dx |\nabla \chi(x)|^2 = \pi^2$. With
\begin{equation}
\Psi_{j,u_1,u_2}(x_1,x_2) = \Psi_j(x_1,x_2) (b^j L)^{-3} \chi(b^{-j}(x_1-u_1)/L) \chi(b^{-j}(x_2-u_2)/L)
\end{equation}
we have, by a continuous version of the IMS localization formula, 
\begin{equation}\label{48}
  \left\langle \Psi_j \left| H^{(2)}_U  \right|  \Psi_j \right\rangle  = \int_{\R^3} du_1 \int_{\R^3} du_2 \left\langle \Psi_{j,u_1,u_2} \left| H^{(2)}_U   -  \frac {2\pi^2}{b^{2j}L^2} \right|  \Psi_{j,u_1,u_2}\right\rangle \,. 
\end{equation}
Note that since $b^j\ell\geq |x_1-x_2|\geq b^{j-2} \ell$ on the support of $\varphi_j$, the wave function $\Psi_{j,u_1,u_2}$ is non-zero only if the distance $d$ between the two balls of radius $b^j L$ centered at $u_1$ and $u_2$, respectively, satisfies 
\begin{equation}\label{bounddd}
d \in D_j : = [ b^{j-2}\ell-4 b^jL,\ell b^j ] \,.
\end{equation}
We shall choose $L< \ell/(4b^2)$.

\begin{Lemma}\label{lem3}
Assume that $\Psi$ is normalized and supported in $B_1\times B_2$ where $B_1$ and $B_2$ are disjoint balls of some radius $R$, separated a distance $d$. Then
\begin{equation}\label{ls}
\langle \Psi| H^{(2)}_{U}|\Psi\rangle \geq 2 E^{(1)}(\alpha) - \frac {2\alpha} d + \frac{U}{d+4R}\,.
\end{equation}
\end{Lemma}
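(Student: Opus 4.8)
The plan is to bound the three pieces of $H^{(2)}_U$ separately, exploiting the fact that $\Psi$ is supported where $x_1\in B_1$ and $x_2\in B_2$ with $\dist(B_1,B_2)=d$. The interaction term is the easy one: on the support of $\Psi$ we have $|x_1-x_2|\leq d+2R+2R = d+4R$, hence $U/|x_1-x_2|\geq U/(d+4R)$, which gives the last term in \eqref{ls}. The electronic kinetic-plus-field part and the phonon energy are what remain, and here the point is that, because the two electrons live in disjoint regions, the two single-polaron Hamiltonians are essentially decoupled except through the shared phonon field; I would like to say $\langle\Psi| p_1^2 -\sqrt\alpha\phi(x_1) - \sqrt\alpha\phi(x_2) + H_f + p_2^2 |\Psi\rangle \geq 2E^{(1)}(\alpha) - 2\alpha/d$, with the $-2\alpha/d$ accounting for the cross-attraction mediated by the field.

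The mechanism for the $-2\alpha/d$ term is the standard completion-of-squares estimate for the Fr\"ohlich field. Writing $\phi(x)=a(f_x)+a^\dagger(f_x)$ with $\widehat{f_x}(k)=(\sqrt2\pi|k|)^{-1}e^{-ikx}$, one has for any one-body function $g$ the operator bound $H_f - \sqrt\alpha\,(a(g)+a^\dagger(g)) \geq -\alpha\|g\|^2$, with the minimum attained by a coherent state. The idea is to split the field into a part that binds to $x_1$ and a part that binds to $x_2$: introduce a partition $1 = h_1(k) + h_2(k)$ (or more simply work with the fact that $\phi(x_1)+\phi(x_2) = a(f_{x_1}+f_{x_2}) + \mathrm{h.c.}$) and complete the square in a way that produces, besides $2E^{(1)}(\alpha)$, the cross term $-2\alpha\,\Re\langle f_{x_1}, f_{x_2}\rangle$. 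A direct computation gives $\langle f_{x_1},f_{x_2}\rangle = (2\pi^2)^{-1}\int |k|^{-2} e^{ik(x_1-x_2)}\,dk = 1/(2|x_1-x_2|)$, and since $|x_1-x_2|\geq d$ on the support of $\Psi$ this cross term is bounded below by $-2\alpha/(2d)\cdot 2 = -2\alpha/d$; the two ``diagonal'' pieces $p_i^2 - \sqrt\alpha\phi(x_i) + (\text{half of }H_f)$ are each bounded below by $E^{(1)}(\alpha)$ by definition of the single-polaron ground state energy (translation invariance makes the center of the ball irrelevant). The cleanest way to organize this, as in \cite{FLST}, is probably to split $H_f = \int |h_1|^2 a^\dagger a + \int |h_2|^2 a^\dagger a$ with $|h_1|^2+|h_2|^2\equiv 1$ chosen to concentrate near $x_1$ resp.\ $x_2$, bound each piece by its own one-polaron energy, and control the leftover cross interaction by $2\alpha\,|\langle f_{x_1},f_{x_2}\rangle| \le \alpha/d$ plus a term that vanishes as the supports of $h_1,h_2$ are optimized; alternatively one keeps all of $H_f$ and simply uses the coherent-state trial modification in a localized calculation.

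The main obstacle is making the field decoupling rigorous with the correct constant $2\alpha/d$ and no spurious loss: one must be careful that the phonon energy $H_f$ is genuinely shared (it cannot simply be written as a sum of two commuting copies), so the completion of squares has to be done once, globally, for $H_f - \sqrt\alpha(a(f_{x_1}+f_{x_2}) + \mathrm{h.c.})$, yielding $-\alpha\|f_{x_1}+f_{x_2}\|^2 = -\alpha\|f_{x_1}\|^2 - \alpha\|f_{x_2}\|^2 - 2\alpha\Re\langle f_{x_1},f_{x_2}\rangle$. But $\phi(x_i)$ couples to $x_i$ which is an operator, not a number, so this heuristic must be implemented at the level of the full Hamiltonian acting on $\Psi$: either by the Lieb--Yamazaki / coherent-state lower bound applied fiberwise in the electron variables, or, following \cite{FLST} directly, by the localization argument quoted there. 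I expect the cleanest route is to cite the corresponding estimate from \cite{FLST} and simply record how the restriction to disjoint balls of radius $R$ at separation $d$ produces the stated bound \eqref{ls}; the only genuinely new bookkeeping is the Coulomb term, which as noted is immediate.
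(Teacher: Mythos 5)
Your treatment of the Coulomb term ($|x_1-x_2|\le d+4R$ on the support of $\Psi$) is correct, but the field-decoupling mechanism that forms the core of your sketch does not work as written. (i) The global completion of squares $H_f-\sqrt\alpha\,(a(f_{x_1}+f_{x_2})+\mathrm{h.c.})\ge -\alpha\|f_{x_1}+f_{x_2}\|^2$ is vacuous for the Fr\"ohlich form factor: $\|f_x\|^2=(2\pi^2)^{-1}\int |k|^{-2}\,dk=+\infty$, so this bound yields $-\infty$, not $2E^{(1)}(\alpha)$ plus a finite cross term. The finiteness of $E^{(1)}(\alpha)$ rests on the interplay of the interaction with the electron kinetic energy (Lieb--Yamazaki-type commutator bounds), and in your decomposition the $p_i^2$ have already been reserved for the ``diagonal'' pieces. (ii) The diagonal bound you invoke, $p_i^2-\sqrt\alpha\,\phi(x_i)+\tfrac12 H_f\ge E^{(1)}(\alpha)$, is false: since $H_f\ge 0$ one has $\infspec\big(p^2-\sqrt\alpha\,\phi+\tfrac12 H_f\big)\le E^{(1)}(\alpha)$ trivially, and it is in fact strictly lower by an amount of order $\alpha^2$ at large coupling (halving the field energy is equivalent, after optimizing a coherent state, to a Pekar problem with coupling $2\alpha$). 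This deficit is independent of $d$ and so cannot be absorbed into the $-2\alpha/d$ error. (iii) A momentum-space partition $|h_1|^2+|h_2|^2=1$ cannot be ``concentrated near $x_1$ resp.\ $x_2$'': the $h_i$ are functions of $k$, every mode couples to both electrons, and assigning modes to particles again spoils the exact single-polaron energies. (A minor slip: with the paper's normalization $\langle f_{x_1},f_{x_2}\rangle=|x_1-x_2|^{-1}$, not $(2|x_1-x_2|)^{-1}$.) Producing exactly $2E^{(1)}(\alpha)$ together with a cross term of size exactly $2\alpha/d$ is precisely the nontrivial content of the lemma, and your sketch does not close it.

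Your fallback option coincides with what the paper actually does: \eqref{ls} is quoted from \cite[Lemma~1]{FLST}, whose proof uses the functional-integral (Feynman--Kac) representation rather than operator square-completion. After integrating out the Gaussian phonon field one obtains the effective two-time action $\frac{\alpha}{2}\sum_{i,j}\iint e^{-|t-s|}\,|x_i(t)-x_j(s)|^{-1}\,dt\,ds$; for paths confined to the two balls the $i\ne j$ terms contribute at most $2\alpha/d$ per unit time, the $i=j$ terms reconstruct two independent single-polaron energies $2E^{(1)}(\alpha)$, and the Coulomb repulsion contributes at least $U/(d+4R)$, which is \eqref{ls}. So citing \cite{FLST} is legitimate and is the paper's own route, but it is a citation rather than an independent argument; the operator-theoretic derivation you outline would require a genuinely different implementation (ultraviolet cutoff plus Lieb--Yamazaki estimates, with errors that are not obviously of the form $-2\alpha/d$) before it could serve as a proof.
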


Eq.~\eqref{ls}  is an easy consequence of the functional integral representation of the ground state energy.
The proof can be found in \cite[Lemma~1]{FLST}.

We shall apply inequality \eqref{ls} to (\ref{48}), with $U$ replaced by $2\alpha(1+\epsilon)$ for some $\epsilon>0$. Using also (\ref{eq:locerror}) we conclude that 
\begin{align}\nonumber
e_j & \geq 2 E^{(1)}(\alpha) + \|\Psi_j\|^{-2} \left\langle \Psi_j \left|  \frac{U-2\alpha(1+ \epsilon)}{|x_1-x_2|} \right|\Psi_j\right\rangle \\ &  \quad + \min_{d\in D_j} \left( - \frac{2\alpha}{d} + \frac{2\alpha(1+\epsilon)}{ d + 4 b^j L} \right)  - b^{-2j} \left( \frac {b^2 \pi^2}{2(\ell-\ell/b)^2} + \frac {2\pi^2}{L^2}\right) \,.\label{tcL}
\end{align}
For given $\epsilon>0$ (and given $b>1$), we can choose $L$ to be a small enough
constant times $\ell$ such that, as long as $\ell$ is large enough, the sum of
the terms in the second line of (\ref{tcL}) is positive. More precisely, for
given $\epsilon>0$ (and given $b>1$) we can choose $\delta>0$ small enough such
that
\begin{equation}
\kappa_\epsilon := \min_{b^{-2}-4\delta \leq d\leq 1} \left( -\frac 1 d + \frac{1+\epsilon}{d+4 \delta} \right)>0\,.
\end{equation}
With $L = \delta \ell$, we see that 
\begin{equation}
\min_{d\in D_j} \left( - \frac{2\alpha}{d} + \frac{2\alpha(1+\epsilon)}{ d + 4 b^j L} \right)  - b^{-2j} \left( \frac {b^2 \pi^2}{2(\ell-\ell/b)^2} + \frac {2\pi^2}{L^2}\right) \geq 0
\end{equation}
for all $j\geq 1$ if 
\begin{equation}
\ell \geq \ell_c : = \frac{\pi^2}{\alpha \kappa_\epsilon} \left( \frac b{4 ( 1 - b^{-1})^2 } + \frac 1{b\delta^2} \right)\,.
\end{equation}
Under this condition, we thus have 
\begin{equation}\label{bej}
e_j  \geq 2 E^{(1)}(\alpha) + \|\Psi_j\|^{-2} \left\langle \Psi_j \left|  \frac{U-2\alpha(1+ \epsilon)}{|x_1-x_2|} \right|\Psi_j\right\rangle
\end{equation}
for all $j \geq 1$.

\bigskip

\emph{Step 3. Upper bound on the particle distance.} From \eqref{2.4} and our bounds (\ref{eq:e0}) and (\ref{bej}) on $e_j$ we conclude that, for all $\ell\geq \ell_c$, 
\begin{align}\nonumber
 H^{(2)}_U - 2E^{(1)}(\alpha)  & \geq \left(  E_U^{(2)}(\alpha) -2 E^{(1)}(\alpha) - \frac{\pi^2}{2 \ell^2 (1-b^{-1})^2} \right) \varphi_0(|x_1-x_2|)^2  \\ \nonumber & \quad +  \frac {U-2\alpha(1+\epsilon) }{|x_1-x_2|} \left( 1 - \varphi_0(|x_1-x_2|)^2 \right) \\ \nonumber & \geq \left(  E_U^{(2)}(\alpha) -2 E^{(1)}(\alpha) - \frac{\pi^2}{2 \ell^2 (1-b^{-1})^2}\right) \theta(\ell - |x_1-x_2|)   \\  & \quad +   \frac {U-2\alpha(1+\epsilon)}{|x_1-x_2|} \theta(|x_1-x_2| - \ell)\,.
\end{align}
Now let $\Psi$ satisfy
\begin{equation}\label{asps}
\langle\Psi| H^{(2)}_U - 2 E^{(1)}(\alpha) | \Psi\rangle \leq \lambda   \left(  E_U^{(2)}(\alpha) -2 E^{(1)}(\alpha) \right)
\end{equation}
for some $0<\lambda<1$. For such a $\Psi$, the previous inequality yields the bound
\begin{align}\nonumber
&\left\langle \Psi \left|  \left[ (1-\lambda)\left(2 E^{(1)}(\alpha) -  E_U^{(2)}(\alpha) \right) + \frac{\pi^2}{2 \ell^2 (1-b^{-1})^2}  \right] \theta(\ell - |x_1-x_2|) \right| \Psi \right\rangle \\ & \geq \left\langle \Psi \left| \left[ \lambda \left(2 E^{(1)}(\alpha) -  E_U^{(2)}(\alpha) \right) +   \frac {U-2\alpha(1+\epsilon)}{|x_1-x_2|}\right] \theta(|x_1-x_2| - \ell)\right| \Psi\right\rangle \,. \label{tat}
\end{align}
To conclude the proof, we need the following lemma.

\begin{Lemma}[\textbf{Calculus lemma -- general version}]\label{calcu}
Let $\rho \in L^1(\R_+)$ be non-negative, with $\int_0^\infty \rho(r) dr = 1$. Assume that there are constants $a\geq 0$, $b>0$, $c>0$, $0<\lambda\leq 1$ and $\ell_c\geq 0$ such that 
\begin{equation}\label{asu}
\left( (1-\lambda) a + \frac b{\ell^2}\right) \int_0^\ell \rho(r) dr \geq \int_\ell^\infty \left( \lambda a + \frac c r \right)\rho(r) dr 
\end{equation}
for all $\ell \geq \ell_c$. Then 
\begin{equation}\label{cts}
\int_0^\infty \frac 1 r \rho(r) dr \geq \frac {\lambda c}{b + 2 c \ell_c}\,.
\end{equation}
\end{Lemma}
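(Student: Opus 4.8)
The plan is to follow the proof of Lemma \ref{calcu0}, adapted to accommodate the term linear in $a$ which is absent there. We may assume $\int_0^\infty r^{-1}\rho(r)\,dr<\infty$, otherwise \eqref{cts} is trivial. Put $f(R):=\int_0^R\rho(r)\,dr$ and $h(R):=\int_R^\infty r^{-1}\rho(r)\,dr$, so $h(0)$ is the quantity to be bounded from below, $f$ is non-decreasing with $f(\infty)=1$, and $f'=\rho$. Since $\int_\ell^\infty\lm a\,\rho(r)\,dr=\lm a\bigl(1-f(\ell)\bigr)$, the hypothesis \eqref{asu} is equivalent to $\bigl(a+b\ell^{-2}\bigr)f(\ell)\ge\lm a+c\,h(\ell)$ for $\ell\ge\ell_c$, that is, $f(\ell)/\ell^2\ge(\lm a+c\,h(\ell))/(a\ell^2+b)$ for $\ell\ge\ell_c$. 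Dropping $c\,h(\ell)\ge0$ gives $f(\ell)\ge\lm a\ell^2/(a\ell^2+b)$ for $\ell\ge\ell_c$; note that at a radius where $a\ell^2=b$ the right-hand side equals $\lm/2$. One case is then disposed of at once: if $a\ell_c^2\ge b$ then $f(\ell_c)\ge\lm a\ell_c^2/(a\ell_c^2+b)\ge\lm/2$, hence $h(0)\ge\ell_c^{-1}\int_0^{\ell_c}\rho=f(\ell_c)/\ell_c\ge\lm/(2\ell_c)\ge\lm c/(b+2c\ell_c)$, the last step using $b\ge0$. We therefore assume henceforth $a\ell_c^2<b$ (which in particular covers $a=0$) and set $L:=\sqrt{b/a}\in(\ell_c,\infty]$.

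For this main case I would start from the Fubini identity $h(0)=\int_0^\infty f(r)\,r^{-2}\,dr$ and, the integrand being non-negative, keep only $r\in[\ell_c,\infty)$, split at $L$. On $[L,\infty)$ use $f(r)\ge f(L)\ge\lm/2$ together with $\int_L^\infty r^{-2}\,dr=L^{-1}$. On $[\ell_c,L]$ one has $ar^2+b\le 2b$, so the pointwise bound gives $f(r)/r^2\ge(\lm a+c\,h(r))/(2b)$; integrating over $[\ell_c,L]$ and inserting the elementary identity $\int_{\ell_c}^{L}h(r)\,dr=f(L)-f(\ell_c)-\ell_c\,h(\ell_c)+L\,h(L)$ (a one-line Fubini computation) produces a lower bound for $h(0)$ in terms of $f(\ell_c)$ and $h(\ell_c)$. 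Using $f(\ell_c)\le\ell_c\,h(0)$ and $h(\ell_c)\le h(0)$ to move those terms to the left and solving for $h(0)$ then yields a bound of the asserted shape, $h(0)\ge\lm c/\bigl(K(b+c\ell_c)\bigr)$ with an absolute constant $K$. (When $a=0$ one has $L=\infty$ and the weight $(ar^2+b)^{-1}$ is already the constant $b^{-1}$, so no loss occurs there and $K=1$ works, i.e.\ $h(0)\ge\lm c/(b+c\ell_c)\ge\lm c/(b+2c\ell_c)$.)

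To recover the precise constant $\lm c/(b+2c\ell_c)$ one has to be less wasteful, exactly in the spirit of \eqref{eq:calcbasic}: instead of replacing $(ar^2+b)^{-1}$ by $(2b)^{-1}$ one keeps the weight and performs the Fubini swap $c\int_{\ell_c}^\infty h(r)(ar^2+b)^{-1}\,dr=c\int_{\ell_c}^\infty r^{-1}\rho(r)\bigl(\int_{\ell_c}^r(at^2+b)^{-1}\,dt\bigr)dr$, using the algebraic identity $\int_{\ell_c}^\infty\ell^{-2}\,d\ell-b\int_{\ell_c}^\infty\ell^{-2}(a\ell^2+b)^{-1}\,d\ell=a\int_{\ell_c}^\infty(a\ell^2+b)^{-1}\,d\ell$, by which the pieces that would diverge as $L\to\infty$ cancel (this cancellation is where $f\le1$ is used). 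One is then left, as in Lemma \ref{calcu0}, with a basic inequality of the form $h(\ell_c)\ge\Gamma-f(\ell_c)/\ell_c$ with an explicit $\Gamma$, and \eqref{cts} follows either by adding $f(\ell_c)/\ell_c$ directly or by distinguishing whether $f(\ell_c)/\ell_c$ is above or below $\tfrac12\Gamma$, just as in the deduction of \eqref{cts0}.

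The only genuine obstacle is precisely this linear-in-$a$ term. In Lemma \ref{calcu0} one simply integrates the hypothesis over all of $[\ell_c,\infty)$; here the $\lm a$ piece would contribute a divergent $\lm a\cdot\infty$, so that route is blocked. The point is that the divergence is spurious: it is cancelled — via $f\le1$ and the algebraic identity above — by the matching divergence coming from the $b/\ell^2$ localisation term, so that after rearrangement every quantity is finite and the bound \eqref{cts} survives and, importantly, does not degenerate as $a\downarrow0$, which is why it is uniform in $a$ (this uniformity is exactly what is needed in Theorem \ref{thm:binding} as $U\to U_c(\alpha)$).
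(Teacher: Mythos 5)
Your reformulation of \eqref{asu} as $(a+b\ell^{-2})f(\ell)\ge\lambda a+c\,h(\ell)$, the disposal of the case $a\ell_c^2\ge b$, and the main-case computation with the split at $L=\sqrt{b/a}$ are all correct, but what that executed part delivers is only a bound of the form $\int_0^\infty r^{-1}\rho\,dr\ge\lambda c/(K(b+c\ell_c))$ with an absolute constant $K>1$ (about $4$ with the bounds $ar^2+b\le 2b$ and $f(L)\ge\lambda/2$), which is strictly weaker than \eqref{cts}; it would suffice for the application in \eqref{eq:bindabs}, where the constant is unspecified, but it does not prove the lemma as stated. The entire burden therefore falls on your third paragraph, and that paragraph is a plan, not a proof: you never compute the $\Gamma$ in the claimed inequality $h(\ell_c)\ge\Gamma-f(\ell_c)/\ell_c$, nor check that it is large enough to yield $\lambda c/(b+2c\ell_c)$. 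When one tries to carry it out, the obstacle is genuine: after the Fubini swap the inner integral $\int_{\ell_c}^{s}(at^2+b)^{-1}dt$ saturates, for $s$ beyond $L$, at a quantity of order $(ab)^{-1/2}$ instead of growing like $s/b$, so the mass beyond $L$ no longer produces the term $1-f(\ell_c)-\ell_c h(\ell_c)$ that drives \eqref{eq:calcbasic} but only $h$-type quantities, while the compensating term $\lambda a\int_{\ell_c}^\infty(ar^2+b)^{-1}dr$ is only of order $\lambda\sqrt{a/b}$; it is not at all evident (and you do not show) that these pieces recombine into the stated constant uniformly in $a$. So the step ``the bound survives with the precise constant'' is exactly the unproven point.

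The missing idea is much simpler than weight bookkeeping: localize to the region where $f\le\lambda$ rather than to $[\ell_c,\sqrt{b/a}]$. Define $\ell_0\in(0,\infty]$ by $f(\ell_0)=\lambda$. If $\ell_0\le\ell_c$, then $f(\ell_c)\ge\lambda$ and $h(0)\ge f(\ell_c)/\ell_c\ge\lambda/\ell_c\ge\lambda c/(b+2c\ell_c)$. If $\ell_0>\ell_c$, then for $\ell\in[\ell_c,\ell_0]$ one has $f(\ell)\le\lambda$, hence $(1-\lambda)a f(\ell)\le\lambda a(1-f(\ell))$, so the two $a$-terms in \eqref{asu} cancel identically and the hypothesis collapses to $b\ell^{-2}f(\ell)\ge c\,h(\ell)$ on that interval, with no $a$ left; the integration by parts of Lemma \ref{calcu0}, performed only over $[\ell_c,\ell_0]$ and using $f(\ell_0)=\lambda$, $f(\ell_c)\le\ell_c h(0)$ and $\int_{\ell_c}^{\ell_0}s^{-1}\rho(s)\,ds\le h(0)$, gives $h(0)\ge\frac{c}{b}\bigl(\lambda-2\ell_c h(0)\bigr)$, which is precisely \eqref{cts}, and the case distinction at the end mirrors the one you quote from the deduction of \eqref{cts0}. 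This cancellation on $\{f\le\lambda\}$ is why the constant is exact and uniform in $a$; without it (or a completed substitute), your proposal proves a statement of the right shape but not the lemma.
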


\begin{proof}
Let $f(r) = \int_0^r \rho(s) ds$, and define $0<\ell_0\leq \infty$ by $f(\ell_0) = \lambda$. In case $\ell_0\leq \ell_c$, we have $f(\ell_c)\geq \lambda$ and hence
\begin{equation}\label{ua}
\int_0^\infty \frac 1 r \rho(r) dr \geq \frac{f(\ell_c)}{\ell_c} \geq \frac \lambda {\ell_c}\,.
\end{equation}
In particular, (\ref{cts}) holds. We can thus assume $\ell_0 > \ell_c$. For all $\ell\leq \ell_0$, we have
\begin{equation}
(1-\lambda) f(\ell) \leq \lambda (1-f(\ell)) 
\end{equation}
and hence (\ref{asu}) implies that 
\begin{equation}\label{cons}
 \frac b{\ell^2} f(\ell)  \geq c \int_\ell^\infty  \frac 1 r \rho(r) dr 
\end{equation}
for all $\ell_c\leq \ell \leq \ell_0$. Using integration by parts,  
\begin{equation}
\int_0^\infty \frac 1 r \rho(r) dr \geq \frac{f(\ell_c)}{\ell_c} + \int_{\ell_c}^{\ell_0} \frac 1 r f'(r) dr = \frac{f(\ell_0)}{\ell_0} + \int_{\ell_c}^{\ell_0} \frac 1 {r^2} f(r) dr \,.
\end{equation}
The first term on the right can be dropped for a lower bound, and in the integrand of the second we can use (\ref{cons}). We obtain
\begin{equation}
\int_0^\infty \frac 1 r \rho(r) dr \geq \frac c b \int_{\ell_c}^{\ell_0}  dr \int_r^\infty \frac{ f'(s)}{s} ds =  \frac c b \int_{\ell_c}^{\infty}  \frac{ f'(s)}{s}  \left( \min\{s,\ell_0\} -\ell_c \right) ds\,.
\end{equation}
For a lower bound, we can restrict the integral on the right to $s\leq \ell_0$.
Using (\ref{ua}) again, this yields
\begin{align}\nonumber
\int_0^\infty \frac 1 r \rho(r) dr & \geq   \frac c b \int_{\ell_c}^{\ell_0} 
\frac{ f'(s)}{s}  \left( s -\ell_c \right) ds \\ \nonumber & = \frac c b \left(
f(\ell_0) - f(\ell_c) \right) - \frac c b \ell_c \int_{\ell_c}^{\ell_0} \frac 1
s \rho(s) ds \\ & \geq \frac c b \left (\lambda - 2 \ell_c \int_0^\infty \frac
1r \rho(r) dr \right)\,,
\end{align}
which proves the lemma.
\end{proof}

We apply this lemma to (\ref{tat}), with $\rho$ being the probability
distribution of $|x_1-x_2|$ in the state $\Psi$. We conclude that, if $\Psi$
satisfies (\ref{asps}), then
\begin{equation}
\left\langle \Psi \left|  \frac 1{|x_1-x_2|}  \right| \Psi \right\rangle  \geq \frac {\lambda( U - 2 \alpha(1+ \epsilon))}{\frac{\pi^2}{2(1-b^{-1})^2} + 2 \ell_c  (U-2\alpha(1+\epsilon))}\,.
\end{equation}
This implies \eqref{eq:bindabs}.
\end{proof}

%%%%%%%%%%%%%%%%%%%%%%%%%%%%%%%%%%%%%%%%%%%%

\section{Many Polarons}

We recall that the Hamiltonian $H^{(N)}_U$ for $N$ particles is given by
\eqref{eq:ham} and that its ground state and minimum break-up energy were
defined in \eqref{eq:gse} and \eqref{eq:mbe}. 
The analogue of Theorem~\ref{thm:binding} in the $N$-particle case is as
follows.

\begin{Theorem}[\textbf{Upper bound on the $N$-polaron
radius}]\label{thm:bindingN}
For any $N\geq 2$ and  $\epsilon>0$ there is a constant $C_\epsilon(N)>0$  such that for all
$0< 2\alpha(1+\epsilon) < U$ with $ E^{(N)}_U(\alpha) < \widetilde E_U^{(N)}(\alpha)$
and all states $\Psi$
\begin{equation} \label{eq:bindabs2} \left\langle \Psi \left| \frac
      1{\max_{i\neq j}|x_i-x_j|} \right| \Psi \right \rangle \geq
  \frac { U - 2 \alpha(1+\epsilon)}{C_\epsilon(N) ( 1 + U /\alpha)}
  \frac{ \left\langle \Psi \left| \widetilde E^{(N)}_U(\alpha) -
        H^{(N)}_U \right| \Psi \right \rangle}{\widetilde
    E^{(N)}_U(\alpha) - E_U^{(N)}(\alpha)} \,.
\end{equation}
\end{Theorem}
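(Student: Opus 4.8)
The plan is to mirror, almost line for line, the three-step proof of Theorem~\ref{thm:binding}, with the interparticle distance $|x_1-x_2|$ replaced throughout by the diameter $D(X):=\max_{i\neq j}|x_i-x_j|$ and with the single-particle break-up of that proof replaced by a two-cluster decomposition of the $N$ particles; for $N=2$ one recovers Theorem~\ref{thm:binding} exactly (then $\widetilde E^{(2)}_U(\alpha)=2E^{(1)}(\alpha)$ is $U$-independent and the only cluster decomposition is $\{1\}\sqcup\{2\}$).

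\emph{Step 1 and the geometric input.} I would use the same radial quadratic partition of unity $\{\varphi_j\}_{j\ge0}$ of \eqref{phi1}--\eqref{phno}, now evaluated at $D(X)$, so that $\Psi_j:=\varphi_j(D)\Psi$ is supported where $D\le\ell$ for $j=0$ and where $D\in[b^{j-2}\ell,b^j\ell]$ for $j\ge1$. Since $D$ is $\sqrt2$-Lipschitz on $\R^{3N}$, the IMS formula gives $\langle\Psi|H^{(N)}_U|\Psi\rangle=\sum_j e_j\|\Psi_j\|^2$ with localization error bounded by $2\sum_k|\varphi_k'(D)|^2$, controlled through \eqref{eq:locerror} exactly as in the bipolaron case; in particular $e_0\ge E^{(N)}_U(\alpha)-\pi^2/(2\ell^2(1-b^{-1})^2)$ as in \eqref{eq:e0}. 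The one genuinely new geometric fact needed is: if $u_1,\dots,u_N\in\R^3$ are not all equal, there is a partition $\{1,\dots,N\}=A\sqcup A^c$ into nonempty sets with $\min_{i\in A,\,k\in A^c}|u_i-u_k|=g$ and $g\ge\bigl(\max_{i\neq k}|u_i-u_k|\bigr)/(N-1)$; one takes $g$ to be the length of the longest edge in a minimum spanning tree of the $u_i$'s and lets $A,A^c$ be the two subtrees obtained by deleting that edge.

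\emph{Step 2 (the crux).} For $j\ge1$, as in \eqref{48}, I would further localize each of the $N$ particles to its own ball $B(u_i,L_j)$ of radius $L_j=\delta b^j\ell$ ($L=\delta\ell$, with $\delta$ fixed small in terms of $\epsilon$, $b$ and $N$) by the continuous IMS formula, at the additional cost $N\pi^2/(b^{2j}L^2)$. Then I would invoke the $N$-particle generalization of Lemma~\ref{lem3}: if $\Psi$ is normalized and supported where $x_i\in B(u_i,L)$ for every $i$, $\{A,A^c\}$ is a partition as above, and $g>2L$, then
\[
\langle\Psi|H^{(N)}_U|\Psi\rangle\ \ge\ E^{(|A|)}_U(\alpha)+E^{(|A^c|)}_U(\alpha)+U\sum_{i\in A,\,k\in A^c}\frac1{|u_i-u_k|+2L}-2\alpha\sum_{i\in A,\,k\in A^c}\frac1{|u_i-u_k|-2L}\,,
\]
proved by the functional-integral argument of \cite[Lemma~1]{FLST}: the within-cluster contributions are bounded below by $E^{(|A|)}_U(\alpha)+E^{(|A^c|)}_U(\alpha)$, the cross-Coulomb repulsion is at least its minimal value, and — because each particle is confined to its own ball — the phonon-mediated attraction between the two clusters is bounded \emph{pairwise} by $\sum_{i\in A,\,k\in A^c}2\alpha/(|u_i-u_k|-2L)$. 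Splitting $U=2\alpha(1+\epsilon)+(U-2\alpha(1+\epsilon))$ in the cross-Coulomb term and using $E^{(|A|)}_U(\alpha)+E^{(|A^c|)}_U(\alpha)\ge\widetilde E^{(N)}_U(\alpha)$ together with $g\ge D(u)/(N-1)$, one checks that for $\delta$ small the $2\alpha(1+\epsilon)$-piece exceeds the phonon attraction along the gap pair by at least a constant times $\alpha\epsilon/(b^j\ell)$, which for $\ell\ge\ell_c:=\bigl(\tfrac{\pi^2 b^2}{2(1-b^{-1})^2}+\tfrac{N\pi^2}{\delta^2}\bigr)/(b\alpha\epsilon)$ absorbs both the $\varphi$- and $\chi$-localization errors for every $j\ge1$. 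What is left is $(U-2\alpha(1+\epsilon))\sum_{\mathrm{cross}}(|u_i-u_k|+2L_j)^{-1}\ge(U-2\alpha(1+\epsilon))/(D(u)+2L_j)$; integrating over $u$ and using $D(u)\le D(X)+2L_j$ and $4L_j\le 4\delta b^2 D(X)$ on $\supp\Psi_j$ yields $e_j\ge\widetilde E^{(N)}_U(\alpha)+\tfrac{U-2\alpha(1+\epsilon)}{1+4\delta b^2}\|\Psi_j\|^{-2}\langle\Psi_j|D^{-1}|\Psi_j\rangle$ for all $j\ge1$ and $\ell\ge\ell_c$.

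\emph{Step 3 and the obstacle.} Summing over $j$ and using $\varphi_0(D)^2\le\theta(\ell-D)$, $1-\varphi_0(D)^2\ge\theta(D-\ell)$ together with $E^{(N)}_U(\alpha)<\widetilde E^{(N)}_U(\alpha)$, I obtain, for every normalized $\Psi$ and every $\ell\ge\ell_c$, the expectation-value inequality
\[
\langle\Psi|\,\widetilde E^{(N)}_U(\alpha)-H^{(N)}_U\,|\Psi\rangle\ \le\ \Bigl(\widetilde E^{(N)}_U(\alpha)-E^{(N)}_U(\alpha)+\tfrac{\pi^2}{2\ell^2(1-b^{-1})^2}\Bigr)\langle\theta(\ell-D)\rangle-\tfrac{U-2\alpha(1+\epsilon)}{1+4\delta b^2}\langle\theta(D-\ell)D^{-1}\rangle\,.
\]
With $\lambda:=\langle\Psi|\widetilde E^{(N)}_U(\alpha)-H^{(N)}_U|\Psi\rangle/(\widetilde E^{(N)}_U(\alpha)-E^{(N)}_U(\alpha))$ — the case $\lambda\le0$ being trivial since then the right side of \eqref{eq:bindabs2} is nonpositive, and $\lambda\le1$ always — this rearranges (subtracting $\lambda\,(\widetilde E^{(N)}_U-E^{(N)}_U)\int_0^\ell\rho$ from both sides) into exactly the hypothesis \eqref{asu} of Lemma~\ref{calcu} for the distribution $\rho$ of $D$ under $\Psi$, with $a=\widetilde E^{(N)}_U(\alpha)-E^{(N)}_U(\alpha)$, $b=\pi^2/(2(1-b^{-1})^2)$, $c=(U-2\alpha(1+\epsilon))/(1+4\delta b^2)$ and the above $\ell_c$. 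Since $\ell_c=O(1/\alpha)$, the conclusion \eqref{cts} gives $\langle D^{-1}\rangle\ge\lambda c/(b+2c\ell_c)$ with $2c\ell_c=O(U/\alpha)$, which is \eqref{eq:bindabs2} with a constant $C_\epsilon(N)$ depending only on $\epsilon$ and $N$. I expect the main obstacle to be Step~2: proving the pairwise bound on the inter-cluster phonon attraction in the generalized Lemma~\ref{lem3} — it is there that the coefficient $2\alpha$, matching the Coulomb coefficient, is essential — and verifying that $b$, $\delta$ and hence $\ell_c$ can be chosen to depend only on $\epsilon$ and $N$ while the $1/\alpha$-scaling of $\ell_c$, which is what produces the $(1+U/\alpha)$ in \eqref{eq:bindabs2}, is preserved. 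The spanning-tree lemma and the bookkeeping in Steps~1 and 3 are routine adaptations of the proof of Theorem~\ref{thm:binding}.
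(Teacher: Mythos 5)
Your proposal is correct and follows essentially the same route as the paper: the same $\varphi_j$-partition in the variable $\max_{i\neq j}|x_i-x_j|$, the same secondary ball localization, a two-cluster split with gap at least $\mathrm{diam}/(N-1)$ (your minimum-spanning-tree argument is an equivalent variant of the paper's connectivity-of-balls argument), the $N$-body version of Lemma~\ref{lem3} proved via the functional integral of \cite{FLST} (exactly the external input the paper invokes), and finally Lemma~\ref{calcu} applied to the distribution of the diameter, with $\ell_c=O(1/\alpha)$ producing the $(1+U/\alpha)$ in the denominator. The small bookkeeping differences (keeping the cross-Coulomb term as an expectation of $1/|x_i-x_k|$ versus your $(1+4\delta b^2)$ conversion, and the factor $N-1$ from $n(N-n)\geq N-1$) are immaterial.
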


Since (\ref{eq:bindabs2}) holds for all $\Psi$, it can be
reformulated as an operator inequality. The bound is
non-trivial only for states
$\Psi$ with $\langle\Psi|H^{(N)}_U|\Psi\rangle < \widetilde
E^{(N)}_U(\alpha)$, however, which exist since $ E^{(N)}_U(\alpha) < \widetilde
E_U^{(N)}(\alpha)$ by assumption. For approximate ground states, satisfying
$$
\langle\Psi|H^{(N)}_U|\Psi\rangle \leq (1-\lambda) \widetilde
E^{(N)}_U(\alpha) + \lambda E^{(N)}_U(\alpha)
$$
for some $\lambda >0$, (\ref{eq:bindabs2}) gives a uniform upper bound
on the radius of the multipolaron system. The bound depends only on the value
of $\lambda$ and does not explode as $U\to U_c$.

\begin{proof}
  We perform a localization similar to that in the two-polaron case,
  but with $|x_1-x_2|$ replaced by the maximum of $|x_i - x_j|$ over
  all particle pairs $(i,j)$.  Let $\varphi_i$ be given as in
  (\ref{phi1})--(\ref{phi2}), for some $\ell>0$ and $b>1$.  We shall
  apply (\ref{phno}) with
\begin{equation}
t= \max_{i\neq j} |x_i-x_j|\,.
\end{equation}
  By the IMS
  localization formula,
\begin{align}\nonumber
\langle \Psi | H^{(N)}_U |\Psi \rangle  & = \sum_{j\geq 0} \left\langle \Psi \varphi_{j}(t) \left| H^{(N)}_U - \sum_{i=1}^N  \sum_{k\geq 0}   \left|\nabla_i  \varphi_{k}(t)\right|^2 \right|  \Psi \varphi_{j}(t)  \right\rangle \\ & =: \sum_{j\geq 0} e_j \| \Psi \varphi_j(t) \|^2\,.
\end{align}
Moreover, for almost every $X\in \R^{3N}$, 
\begin{equation}
\sum_{i=1}^N  \sum_{k\geq 0}   \left|\nabla_i  \varphi_{k}(t)\right|^2 = 2 \sum_{k\geq 0}   |\varphi_{k}'(t)|^2\,,
\end{equation}
which can be bounded as in (\ref{eq:locerror}) on the support of $\varphi_j$. In particular,
\begin{equation}
e_0 \geq E^{(N)}_U(\alpha)  - \frac{\pi^2}{2(\ell-\ell/b)^2}\,.
\end{equation}

For $j\geq 1$, we now proceed with the one-particle localization as in the
two-polaron case, localizing particle $i$ in a ball of radius $b^{j}
L$ centered at $u_i$, for suitably chosen $L > 0$. 
More precisely, with $\chi$ given in (\ref{defchi}), let
\begin{equation}
\Psi_{j,\bfu}(X) = \Psi(X) \varphi_j (t) (b^{j}L)^{-3N/2} \prod_{i=1}^N \chi(b^{-j}(x_i-u_i)/L)
\end{equation}
where we denote $\bfu=(u_1,\dots,u_N)$. We have
\begin{equation}\label{norm}
\|\Psi\varphi_j(t)\|^2 =  \int_{\R^{3N}} d\bfu\, \|\Psi_{j,\bfu}\|^2
\end{equation}
and, again using the IMS formula,
\begin{equation}
\langle \Psi \varphi_j(t) |H^{(N)}_U|\Psi \varphi_j(t) \rangle =  \int_{\R^{3N}} d\bfu  \left\langle \Psi_{j,\bfu} \left| H^{(N)}_U - \frac{N \pi^2}{b^{2 j}L^2}   \right|  \Psi_{j,\bfu}  \right\rangle  \,. \label{fl}
\end{equation}

All particles are now supported in balls $B_i$ of radius $b^{j} L$
centered at $u_i$.  Moreover, $\Psi_{j,\bfu}$ is nonzero only if
$\max_{i\neq k} |u_i - u_k| \geq b^{j-2}\ell - 2 b^j L$. In
particular, if we draw open balls of radius $R$ around all the $u_i$,
the resulting set is not connected as long as $(N-1) 2 R\leq
b^{j-2}\ell - 2 b^j L$.  Hence it is possible to split the particles
into two clusters, in such a way that the distance between any two
balls in different clusters is at least as big as
\begin{equation}\label{dijl}
\frac 1{N-1} \left( b^{j-2} \ell -2 b^j L\right) - 2 b^j L = \frac 1{N-1} \left( b^{j-2} \ell - 2 N b^j L\right)\,.
\end{equation}
We shall choose $L<\ell/(2Nb^2)$ to make this positive.

The analogue of Lemma~\ref{lem3} that we need now is the following.

\begin{Lemma}\label{lem3n}
Assume that $\Psi$ is normalized and supported in the set $B_1\times \cdots \times B_N$, where the $B_i$ are balls of radius $R$. Assume that, for some $1\leq n\leq N-1$, the distances $d_{ik}$ between balls $B_i$ for $i\leq n$ and $B_k$ for $k\geq n+1$ are positive. Then, for all $\epsilon>0$, 
\begin{align}\nonumber
\langle \Psi| H^{(N)}_{U}|\Psi\rangle & \geq  E^{(n)}_U(\alpha) + E^{(N-n)}_U(\alpha) \\ & \quad + \sum_{i=1}^n \sum_{k=n+1}^N  \left( \left\langle \Psi\left| \frac{ U -2\alpha(1+ \epsilon)}{|x_i-x_k|} \right| \Psi \right\rangle  - \frac {2\alpha} {d_{ik}} + \frac{2\alpha(1 +\epsilon)}{d_{ik}+4R}  \right) \,.
\end{align}
\end{Lemma}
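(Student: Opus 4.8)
The plan is to upgrade the proof of Lemma~\ref{lem3} (which is \cite[Lemma~1]{FLST}) from two single particles to two clusters. The first step is to peel off the cross Coulomb repulsion at its critical strength: for each pair $i\le n<k$ write $U/|x_i-x_k| = 2\alpha(1+\epsilon)/|x_i-x_k| + (U-2\alpha(1+\epsilon))/|x_i-x_k|$ and keep the surplus pieces $(U-2\alpha(1+\epsilon))/|x_i-x_k|$ as genuine potentials -- they reproduce verbatim the corresponding terms on the right-hand side of the Lemma. It then remains to prove the operator inequality $\tilde H\geq \mu$ on the subspace $\mathcal S$ of states whose spatial part is supported in $B_1\times\cdots\times B_N$, where
$$
\tilde H := H^{(N)}_U - \sum_{i\le n<k}\frac{U-2\alpha(1+\epsilon)}{|x_i-x_k|}\,,\qquad
\mu := E^{(n)}_U(\alpha)+E^{(N-n)}_U(\alpha) + \sum_{i\le n<k}\Big(\!-\frac{2\alpha}{d_{ik}}+\frac{2\alpha(1+\epsilon)}{d_{ik}+4R}\Big)\,.
$$
Equivalently, writing $\tilde H_{\mathrm{conf}}$ for $\tilde H$ with Dirichlet walls on the boundaries of the $B_i$ (so that $\tilde H_{\mathrm{conf}}$ preserves $\mathcal S$ and agrees with $\tilde H$ there), one must show $\langle\Psi|e^{-T\tilde H_{\mathrm{conf}}}|\Psi\rangle\le e^{-T\mu}\|\Psi\|^2$ for all $T>0$ and all $\Psi\in\mathcal S$; this gives $\tilde H_{\mathrm{conf}}|_{\mathcal S}\ge\mu$, hence $\tilde H|_{\mathcal S}\ge\mu$.

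For this I would use the Feynman--Kac--Nelson functional integral, exactly as in the proof of Lemma~\ref{lem3}. Integrating out the phonon field (for a general state in $\mathcal S$ this includes a Gaussian integration over the field endpoints, done as in \cite[Lemma~1]{FLST}) expresses $\langle\Psi|e^{-T\tilde H_{\mathrm{conf}}}|\Psi\rangle$ as an expectation over $N$ independent Brownian trajectories $x_1(\cdot),\dots,x_N(\cdot)$, each \emph{killed} on leaving its ball $B_i$ -- the path-integral form of the Dirichlet walls -- against a weight consisting of: boundary factors built from $\Psi$ (encoded in an $L^2(\R^{3N})$ object supported in $B_1\times\cdots\times B_N$ of norm $\|\Psi\|$); the exponential of the phonon-exchange term
$$
\tfrac\alpha2\int_0^T\!\!\int_0^T\sum_{i,j=1}^N\frac{e^{-|s-t|}}{|x_i(s)-x_j(t)|}\,ds\,dt\,;
$$
and the exponential of minus the Coulomb contributions, namely the within-cluster Coulomb at strength $U$ and the cross Coulomb at the reduced strength $2\alpha(1+\epsilon)$ carried by $\tilde H$. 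On the surviving trajectories one has $|x_i(s)-x_k(t)|\geq d_{ik}$ for all $s,t$ and $|x_i(t)-x_k(t)|\leq d_{ik}+4R$, for every cross pair $i\le n<k$.

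Now split the phonon double sum into its within-cluster-$1$, within-cluster-$2$ and cross contributions. The cross phonon part together with the cross Coulomb term $-2\alpha(1+\epsilon)\int_0^T\sum_{i\le n<k}|x_i(t)-x_k(t)|^{-1}\,dt$ is bounded on the surviving trajectories by $T\sum_{i\le n<k}\big(2\alpha/d_{ik}-2\alpha(1+\epsilon)/(d_{ik}+4R)\big)$, using $\int_0^T\!\int_0^T e^{-|s-t|}\,ds\,dt\le 2T$. After factoring out this scalar, the remaining expectation factorizes across the clusters -- the trajectories are independent, and the killing conditions together with the within-cluster Coulomb and phonon-exchange weights each involve only one cluster, the two being coupled solely through the boundary factors (the field factorization implicit in splitting the phonon sum is cleanest in the $Q$-space representation of $\F$, where the field is just extra coordinates). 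What remains is a product $K_1\otimes K_2$ where $K_1$ is the $n$-polaron evolution confined to $B_1,\dots,B_n$ and $K_2$ the $(N-n)$-polaron evolution confined to $B_{n+1},\dots,B_N$; each is a positive operator with $K_1\le e^{-TE^{(n)}_U(\alpha)}$ and $K_2\le e^{-TE^{(N-n)}_U(\alpha)}$, since confinement only raises the ground-state energy. Hence $\langle\Psi|e^{-T\tilde H_{\mathrm{conf}}}|\Psi\rangle\le e^{-T(E^{(n)}_U(\alpha)+E^{(N-n)}_U(\alpha))}\,e^{\,T\sum_{i\le n<k}(2\alpha/d_{ik}-2\alpha(1+\epsilon)/(d_{ik}+4R))}\|\Psi\|^2$, which is the desired bound; adding back the surplus Coulomb pieces gives Lemma~\ref{lem3n}.

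The one delicate point is the functional-integral bookkeeping: that restricting the spatial support genuinely kills the wandering Brownian paths (so the bound $|x_i(s)-x_k(t)|\geq d_{ik}$ holds at \emph{all} times, not only at the endpoints), that the phonon-exchange weight splits cleanly into two single-cluster weights plus a cross term pointwise dominated by the $d_{ik}$, and that the residual expectation really factorizes into the two single-cluster semigroups despite the originally shared field. This is the multi-particle version of the argument in \cite[Lemma~1]{FLST}; the only genuinely new ingredient is the elementary summation of the pairwise phonon-attraction and Coulomb-repulsion estimates over the $n(N-n)$ cross pairs.
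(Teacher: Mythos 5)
Your proposal is correct and follows essentially the same route as the paper, whose proof of Lemma~\ref{lem3n} consists precisely of invoking the Feynman--Kac/path-integral argument of \cite{FLST} (the cluster version of Lemma~\ref{lem3}): Dirichlet confinement to the balls, splitting the phonon-induced attraction into within-cluster and cross-cluster parts, bounding the cross attraction by $2\alpha/d_{ik}$ and the reduced cross repulsion by $2\alpha(1+\epsilon)/(d_{ik}+4R)$ on the killed paths, and factorizing the remaining kernels into the two confined sub-cluster semigroups bounded by $e^{-TE^{(n)}_U(\alpha)}$ and $e^{-TE^{(N-n)}_U(\alpha)}$. The points you flag as delicate (Gaussian field-endpoint bookkeeping and the factorization despite the shared field) are exactly the details the paper defers to \cite{FLST}, so your reconstruction matches the intended proof.
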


Using the path integral representation of the ground state energy, the
proof is easy; we refer to \cite{FLST} for details.

We relabel the particles such that particles $1\leq i\leq n$  belong to one cluster, and $n+1\leq k\leq N$ to the other. The distances $d_{ik}$  are all bounded from below by (\ref{dijl}). They are also
bounded above by $\ell b^j$ on the support of $\Psi_j$. With 
\begin{equation}
D_j := \left[  (N-1)^{-1} \left( b^{j-2} \ell - 2 N b^j L\right)\,,\,   \ell b^j\right] 
\end{equation}
we conclude that 
\begin{align}\nonumber
e_j & \geq \widetilde E^{(N)}_U(\alpha) +(N-1) \|\Psi_j\|^{-2}  \left\langle \Psi_{j}\left| 
  \frac{ U -2\alpha(1+ \epsilon)}{\max_{i\neq k}|x_i-x_k|}  \right| \Psi_{j} \right\rangle \\ &  \quad + (N-1) \min_{d\in D_j} \left( - \frac{2\alpha}{d} + \frac{2\alpha(1+\epsilon)}{ d + 4 b^j L} \right)  - b^{-2j} \left( \frac {b^2 \pi^2}{2(\ell-\ell/b)^2} + \frac {N\pi^2}{L^2}\right) \,.
\end{align}
Here, we have assumed that the minimum over $d$ in the second line is non-negative, in which case we can use that $n(N-n) \geq (N-1)$ for $1\leq n \leq N-1$. We can now argue as in the two-polaron case that, for suitably chosen $L$, the sum of the terms in the second line is  positive for $\ell$ large enough. In fact, with $L= \delta \ell/(N-1)$ and 
\begin{equation}
\kappa_\epsilon := \min_{b^{-2}-2\delta N/(N-1) \leq d\leq N-1} \left( -\frac 1 d + \frac{1+\epsilon}{d+4 \delta} \right)
\end{equation}
(which is positive for small enough $\delta$) this is the case for all $j\geq 1$ if 
\begin{equation}
\ell \geq \ell_c : = \frac{\pi^2}{\alpha \kappa_\epsilon(N-1)^2} \left( \frac b{4 ( 1 - b^{-1})^2 } + \frac N{2b\delta^2} \right)\,.
\end{equation}
 
Recall that $t=\max_{i\neq k} |x_i-x_k|$.
We have thus shown the operator inequality 
\begin{align}\nonumber 
H_U^{(N)} - \widetilde E_U^{(N)}(\alpha) & \geq \left( E^{(N)}_U - \widetilde E_U^{(N)}(\alpha)- \frac{\pi^2}{2(\ell-\ell/b)^2} \right) \theta(\ell -t) \\ & \quad +(N-1) \frac{ U - 2\alpha(1+\epsilon)}{t} \theta (t -\ell)
\end{align}
for all $\ell\geq \ell_c$.  The remainder of the proof is as in the
two-particle case, applying Lemma~\ref{calcu} to the probability
distribution of $t$ in a state $\Psi$.
\end{proof}

%%%%%%%%%%%%%%%%%%%%%%%%%%%%%%%%%%%%%%%%%%%%%%%%

\section{The Pekar-Tomasevich approximation}\label{sec:exopt}

Let us recall, for the reader's convenience, the definition and basic
properties of the PT model \cite{FLST}. There is a non-linear
differential-integral variational principle associated
with the polaron problem,  which gives the exact ground state energy in the
limit $\alpha\to \infty$. This variational problem was investigated in detail by
Pekar \cite{Pe}. 
Pekar and Tomasevich (PT) \cite{PeTo} generalized it to the bipolaron,  and the
extension to $N$-polarons obviously follows from \cite{PeTo}.

The PT functional is the result of a variational calculation and therefore gives
an upper bound to the ground state energy $E^{(N)}_U(\alpha)$. In order to
compute $\langle\Psi,H^{(N)}_U\Psi\rangle$, one takes a $\Psi$ of the form
$\psi\otimes \Phi$ where $\psi\in L^2(\R^{3N})$, $\Phi\in\mathcal F$,  and both
$\psi$ and $\Phi$ are normalized. For a given $\psi$ it is easy to compute the
optimum $\Phi$, and one ends up with the functional
\begin{align}\nonumber
\mathcal P^{(N)}_{U}[\psi] & := \sum_{i=1}^N \int_{\R^{3N}} |\nabla_i \psi|^2
\,dX + U \sum_{i<j} \int_{\R^{3N}} \frac{|\psi(X)|^2}{|x_i-x_j|} \,dX \\ & \quad
- \alpha \iint_{\R^3\times\R^3} \frac{\rho_\psi(x)\, \rho_\psi(y)}{|x-y|}
\,dx\,dy \,, \label{eq:pekar}
\end{align}
where $dX=\prod_{k=1}^N dx_k$, and
\begin{equation}
\rho_\psi(x) = \sum_{i=1}^N \int_{\R^{3(N-1)}} |\psi(x_1,\ldots,x,\ldots,x_N)|^2
\,dx_1\cdots \widehat{dx_i} \cdots dx_N
\end{equation}
with $x$ at the $i$-th position, and $\widehat{dx_i}$ meaning that $dx_i$ has to
be omitted in the product $\prod_{k=1}^N dx_k$. The ground state energy is
\begin{equation}
\mathcal E^{(N)}_U(\alpha) = \inf\left\{ \mathcal P^{(N)}_{U}[\psi] :
\int_{\R^{3N}} |\psi|^2\,dX = 1 \right\} \,.
\end{equation}
The variational argument above gives the upper bound
\begin{equation}
 E^{(N)}_U(\alpha) \leq \mathcal E^{(N)}_U(\alpha) = \mathcal
E^{(N)}_{U/\alpha}(1) \ \alpha^2 \,.
\end{equation}
(The equality follows by scaling.)
For $N=1$ this upper bound is due to Pekar; numerically, one has $\mathcal
E^{(1)}(\alpha) \approx -(0.109) \alpha^2$ \cite{Mi}. Moreover, the minimization
problem for $\mathcal E^{(1)}(\alpha)$ has a unique minimizer (up to
translations), see \cite{Li}.

The upper bound for $N=1$ was widely understood to be asymptotically
exact for large $\alpha$. A proof of this was finally achieved by
Donsker and Varadhan \cite{DoVa}, using the functional integral
representation and large deviation theory. Later, this fact
was rederived in
\cite{LiTh} by operator methods, and it was shown that the
error was
no worse than $\alpha^{9/5}$ for large $\alpha$. The fact that for
fixed ratio $\nu=U/\alpha\geq 0$ and for $N=2$ one has
\begin{equation}
 \label{eq:pekarlimit}
\lim_{\alpha\to\infty} \alpha^{-2} E^{(N)}_{U}(\alpha) = \mathcal
E^{(N)}_{\nu}(\alpha=1)
\end{equation}
was first noted in \cite{spohn}. This is also valid for arbitrary $N$.

We now address the question of whether the infimum $\mathcal
E^{(N)}_{U}(\alpha)$ is attained, that is, whether there is a
minimizer. It was proved in \cite{lewin} that this is the
case provided $\mathcal E^{(N)}_{U}(\alpha)<\tilde{\mathcal
  E}^{(N)}_{U}(\alpha)$. This minimum break-up energy is
defined as
before by
$$
\tilde{\mathcal E}^{(N)}_{U}(\alpha) = \min_{1\leq n\leq N-1}
\left( \mathcal E^{(n)}_U(\alpha) + \mathcal E^{(N-n)}_U(\alpha) \right)\,.
$$
Our next theorem gives an answer in the case of equality, i.e., when $\mathcal
E^{(N)}_{U}(\alpha)=\tilde{\mathcal E}^{(N)}_{U}(\alpha)$.

\begin{Theorem}\label{exopt}
 Let $U_c>0$ be such that $\mathcal E^{(N)}_{U_c}(\alpha)=\tilde{\mathcal
E}^{(N)}_{U_c}(\alpha)$ and assume that there is a sequence $U_n\to U_c$ such
that the infimum $\mathcal E^{(N)}_{U_n}(\alpha)$ is attained. Then the infimum
$\mathcal E^{(N)}_{U_c}(\alpha)$ is attained.
\end{Theorem}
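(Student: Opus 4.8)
The plan is to run the weak-compactness argument announced just after Theorem~\ref{hohosi}, with the PT analogue of Theorem~\ref{thm:bindingN} playing the role that Proposition~\ref{heliumradius} plays in the helium case. That analogue is proved by the very same energy-estimate method: one localizes the particles according to $t=\max_{i\neq j}|x_i-x_j|$ and then each particle in its own ball, the phonon self-attraction $-\alpha\iint\rho_\psi(x)\rho_\psi(y)|x-y|^{-1}\,dx\,dy$ now playing the role of the cross-cluster attraction $-2\alpha/d$ of Lemma~\ref{lem3n}, and one feeds the resulting operator inequality into Lemma~\ref{calcu}. The outcome is: for every $\epsilon>0$ there is $C_\epsilon(N)>0$ such that, whenever $2\alpha(1+\epsilon)<U$ and $\mathcal{E}^{(N)}_U(\alpha)<\tilde{\mathcal E}^{(N)}_U(\alpha)$, every normalized $\psi$ satisfies
\[
\left\langle \psi \left| \frac{1}{\max_{i\neq j}|x_i-x_j|} \right| \psi \right\rangle \ \geq\ \frac{U-2\alpha(1+\epsilon)}{C_\epsilon(N)(1+U/\alpha)}\ \frac{\tilde{\mathcal E}^{(N)}_U(\alpha)-\mathcal{P}^{(N)}_U[\psi]}{\tilde{\mathcal E}^{(N)}_U(\alpha)-\mathcal{E}^{(N)}_U(\alpha)}\,.
\]

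First I would record the soft facts. Since $U\mapsto\mathcal{P}^{(N)}_U[\psi]$ is affine with nonnegative slope, both $\mathcal{E}^{(N)}_U(\alpha)$ and $\tilde{\mathcal E}^{(N)}_U(\alpha)$ are finite, concave and nondecreasing in $U$, hence continuous; so along our sequence $\mathcal{E}^{(N)}_{U_n}(\alpha)\to\mathcal{E}^{(N)}_{U_c}(\alpha)$ and $\tilde{\mathcal E}^{(N)}_{U_n}(\alpha)\to\tilde{\mathcal E}^{(N)}_{U_c}(\alpha)=\mathcal{E}^{(N)}_{U_c}(\alpha)$. I may assume $\mathcal{E}^{(N)}_{U_n}(\alpha)<\tilde{\mathcal E}^{(N)}_{U_n}(\alpha)$ for all $n$: if the infimum at $U_n$ is attained it cannot equal the break-up threshold (a minimizing sequence would then have to split into genuinely separated clusters and could not converge in $L^2$; cf.\ \cite{lewin}). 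I may also assume $U_c>2\alpha$ — by the scaling $\mathcal{E}^{(N)}_U(\alpha)=\alpha^2\mathcal{E}^{(N)}_{U/\alpha}(1)$ the quantity $U_c/\alpha$ is the critical ratio at $\alpha=1$, which exceeds $2$ — and I fix $\epsilon>0$ with $2\alpha(1+\epsilon)<U_c$. Now take normalized minimizers $\psi_n$ of $\mathcal{P}^{(N)}_{U_n}$. By Hardy--Littlewood--Sobolev and Sobolev, $\iint\rho_\psi(x)\rho_\psi(y)|x-y|^{-1}\,dx\,dy\leq\eta\sum_i\|\nabla_i\psi\|^2+C_\eta$ for normalized $\psi$ and any $\eta>0$, so (taking $\eta$ small) $\{\psi_n\}$ is bounded in $H^1(\R^{3N})$. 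Evaluating the PT radius bound at the exact minimizer $\psi_n$, the last fraction equals $1$, hence
\[
\left\langle \psi_n \left| \frac{1}{\max_{i\neq j}|x_i-x_j|} \right| \psi_n \right\rangle \ \geq\ c_0:=\inf_n\frac{U_n-2\alpha(1+\epsilon)}{C_\epsilon(N)(1+U_n/\alpha)}\ >\ 0
\]
uniformly in $n$; the essential point is that this does \emph{not} deteriorate as $U_n\to U_c$.

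The heart of the proof — and the step I expect to be the main obstacle — is to produce, after a translation, an $H^1$-weak limit with full mass. From the uniform lower bound on $\langle\psi_n|(\max_{i\neq j}|x_i-x_j|)^{-1}|\psi_n\rangle$, together with the short-distance estimate $\int_{\{|x_i-x_j|<s\}}|x_i-x_j|^{-1}|\psi_n|^2\leq Cs\|\psi_n\|_{H^1}^2$ to absorb the singularity of the integrand, a Chebyshev-type argument gives fixed $\ell>0$ and $c>0$ with $\int_{\{\max_{i\neq j}|x_i-x_j|\leq\ell\}}|\psi_n|^2\geq c$ for all $n$. Translating all particles by a common vector $(a_n,\dots,a_n)\in\R^{3N}$ that recenters this lump and passing to a subsequence, $\psi_n\rightharpoonup\psi$ weakly in $H^1$ with $\|\psi\|^2\geq c$, so $\psi\neq 0$. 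To show $\mu:=\|\psi\|^2$ equals $1$, I split $\psi_n$ with an IMS partition into a bulk piece (which converges, contributing $\mathcal{P}^{(N)}_{U_c}[\psi]$ up to $o(1)$ by weak lower semicontinuity) and a far piece $\phi_n$, and apply the PT radius bound once more to $\phi_n/\|\phi_n\|$: since $\phi_n$ lives where $\max_{i\neq j}|x_i-x_j|$ is large, the left side there is small, so $\mathcal{P}^{(N)}_{U_n}[\phi_n/\|\phi_n\|]\geq\tilde{\mathcal E}^{(N)}_{U_n}(\alpha)-o(1)$ and hence, by the elementary bound $\mathcal{P}^{(N)}_U[\varphi]\geq\|\varphi\|^2\mathcal{P}^{(N)}_U[\varphi/\|\varphi\|]$ valid for $\|\varphi\|\leq 1$ (it follows from $\mathcal{P}^{(N)}_U[\varphi]-\|\varphi\|^2\mathcal{P}^{(N)}_U[\varphi/\|\varphi\|]=\alpha\|\varphi\|^2(1-\|\varphi\|^2)\iint\rho_{\varphi/\|\varphi\|}\rho_{\varphi/\|\varphi\|}|x-y|^{-1}\geq 0$), $\mathcal{P}^{(N)}_{U_n}[\phi_n]\geq\|\phi_n\|^2(\tilde{\mathcal E}^{(N)}_{U_n}(\alpha)-o(1))$. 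The delicate point is that the attraction is \emph{not} local, so in the IMS splitting one must use the geometric localization provided by the radius bound to make the cross-attraction term negligible; granting this, letting the localization radius tend to infinity yields $\mathcal{E}^{(N)}_{U_c}(\alpha)\geq\mathcal{P}^{(N)}_{U_c}[\psi]+(1-\mu)\,\mathcal{E}^{(N)}_{U_c}(\alpha)$, i.e.\ $\mathcal{P}^{(N)}_{U_c}[\psi]\leq\mu\,\mathcal{E}^{(N)}_{U_c}(\alpha)$. But the same elementary identity gives $\mathcal{P}^{(N)}_{U_c}[\psi]=\mu\,\mathcal{P}^{(N)}_{U_c}[\psi/\sqrt\mu]+\alpha\mu(1-\mu)\iint\rho_{\psi/\sqrt\mu}\rho_{\psi/\sqrt\mu}|x-y|^{-1}\geq\mu\,\mathcal{E}^{(N)}_{U_c}(\alpha)+\alpha\mu(1-\mu)\iint\rho_{\psi/\sqrt\mu}\rho_{\psi/\sqrt\mu}|x-y|^{-1}$, and the last term is strictly positive when $\mu<1$ because $\psi\neq 0$; hence $\mu=1$.

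Finally, with $\|\psi\|=1$ and no escaping mass, $\rho_{\psi_n}\to\rho_\psi$ strongly in $L^{6/5}(\R^3)$ (tightness, a.e.\ convergence and Brezis--Lieb), so the attraction term is continuous along the sequence, while the kinetic term is weakly lower semicontinuous and so is the Coulomb term (Fatou, using $U_n\to U_c$). Therefore $\mathcal{P}^{(N)}_{U_c}[\psi]\leq\liminf_n\mathcal{P}^{(N)}_{U_n}[\psi_n]=\lim_n\mathcal{E}^{(N)}_{U_n}(\alpha)=\mathcal{E}^{(N)}_{U_c}(\alpha)$, and since $\|\psi\|=1$ this is an equality: $\psi$ attains $\mathcal{E}^{(N)}_{U_c}(\alpha)$. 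The one genuinely nontrivial ingredient is the no-escape step of the previous paragraph: because the binding energy $\tilde{\mathcal E}^{(N)}_{U_n}(\alpha)-\mathcal{E}^{(N)}_{U_n}(\alpha)\to 0$, the usual strict-subadditivity argument degenerates, and one must instead exploit the \emph{uniform} geometric confinement furnished by the PT version of Theorem~\ref{thm:bindingN}, handling the non-local attraction term with care.
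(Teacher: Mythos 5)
Your overall architecture (uniform radius bound for minimizers, translation to get a nonzero weak limit, then showing the limit minimizes) matches the paper's, but the step where you show that no mass escapes contains a genuine gap, and it is precisely the step the paper's proof is designed to avoid. You split $\psi_n$ by an IMS partition in the relative variable $t=\max_{i\neq j}|x_i-x_j|$ and then claim (i) that the nonlocal cross-attraction between the bulk and far pieces is negligible ``by the geometric localization provided by the radius bound'', and (ii) that the bulk piece contributes $\mathcal P^{(N)}_{U_c}[\psi]$ up to $o(1)$ ``by weak lower semicontinuity''. Neither claim holds as stated. For (i): the two pieces are separated only in the relative variable; the far piece contains configurations in which some of the particles sit exactly where the bulk cluster is (only one pair distance is large), so the one-particle densities of the two pieces can overlap in $\R^3$ and $\iint\rho_{\rm bulk}(x)\rho_{\rm far}(y)|x-y|^{-1}dx\,dy$ need not be small, however large the localization radius. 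The radius bound controls the distribution of $t$ in the state, not the spatial separation of the localized pieces' densities; contrast this with the proof of the radius bound itself (and of Lemma \ref{lem3n}), where a \emph{second} localization puts each particle into its own ball, so that Newton's theorem controls the cross-cluster attraction. For (ii): $\mathcal P^{(N)}_{U_c}$ is \emph{not} weakly lower semicontinuous, because the attraction term has the wrong sign; moreover the bulk piece, being constrained only in relative coordinates, can itself lose mass by translation, so its attraction energy is not controlled by the weak limit. Since both your proof that $\mu=1$ and your final semicontinuity step rest on these two claims, the argument does not close; you flag this as ``the delicate point'' and then grant it, but granting it essentially grants the theorem.

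The paper's proof sidesteps the issue by never proving $\|\psi\|=1$. It introduces the linearized functional \eqref{eq:pekarpot} with the field $\phi_n=2\sqrt\alpha\,|x|^{-1}*\rho_{\psi_n}$ held fixed, writes $\psi_n=\psi+\tilde\psi_n$, and proves the splitting \eqref{eq:decomp}: for fixed $\phi_n$ the functional is quadratic in $\psi$ apart from the factor $\|\psi\|^2$ multiplying the field energy, so the only cross term is the bilinear one, $\int\phi_n(x_i)\tilde\psi_n\psi\,dX$, which vanishes because $\phi_n$ is uniformly bounded in $L^\infty$, the limit $\psi$ is a fixed $L^2$ function, and $\tilde\psi_n\rightharpoonup0$ gives strong local convergence by Rellich--Kondrashov. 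Applying \eqref{eq:linearize} to both pieces with the \emph{same} field $\phi_n$, together with $\mathcal P^{(N)}_{U_c}[\|\tilde\psi_n\|^{-1}\tilde\psi_n]\geq\mathcal E^{(N)}_{U_c}(\alpha)$, then shows directly that $\|\psi\|^{-1}\psi$ is a minimizer even if mass is lost; your scaling identity $\mathcal P^{(N)}_U[\varphi]\geq\|\varphi\|^2\mathcal P^{(N)}_U[\varphi/\|\varphi\|]$ is correct but cannot replace this linearization in the splitting step. Two smaller points: your reduction ``if the infimum at $U_n$ is attained it cannot equal the break-up threshold'' is unjustified (it is contradicted in spirit by the very statement you are proving at $U_c$), though it is also unnecessary if you use the minimizer-only bound of Proposition \ref{key} (take $\lambda=1$ and use \eqref{eq:massinside}); and ``translating to recenter this lump'' is not automatic, since mass $\geq c$ in the translation-invariant set $\{t\leq\ell\}$ could a priori be spread over many far-separated small clusters -- the paper rules this out via the Euler--Lagrange equation and the pqr theorem in Proposition \ref{nonzero} (a Lions-type vanishing argument would also work), and some such argument is needed before you can assert a nonzero weak limit.
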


It is important in our proof that any $U_c$ as in Theorem \ref{exopt} satisfies
$U_c>2\alpha$, as shown in \cite{lewin}. The key input in our proof is once
again a bound on the maximal distance between the particles.

\begin{proposition}\label{key}
For any $\nu>1$ and $N\geq 2$ there is an $\ell_0<\infty$ and a $\delta_0>0$
such that, whenever $U>2\alpha\nu$ and $\psi_U$ is a minimizer for $\mathcal
E^{(N)}_{U}(\alpha)$, then one has
\begin{equation}
 \label{eq:key}
\int_{\R^{3N}} \theta\big(\mbox{$\sum_{i,j}$} |x_i-x_j|^2 \leq \ell_0^2\big)\ \psi_U^2 \,dX \geq
\delta_0 \,.
\end{equation}
\end{proposition}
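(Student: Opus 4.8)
The plan is to run the scheme of Theorem~\ref{thm:bindingN} on the \emph{nonlinear} functional $\mathcal P^{(N)}_U$ and on its minimizer. Using $\mathcal E^{(N)}_U(\alpha)=\alpha^2\mathcal E^{(N)}_{U/\alpha}(1)$ and the accompanying scaling of minimizers we may take $\alpha=1$, so $U>2\nu$ with $\nu>1$. Put $t:=\big(\sum_{i,j}|x_i-x_j|^2\big)^{1/2}$, which is comparable to $\max_{i\ne j}|x_i-x_j|$, take the quadratic partition of unity $\sum_{j\ge0}\varphi_j(t)^2=1$ from \eqref{phi1}--\eqref{phno}, and for $j\ge1$ the one-particle localization into balls of radius $b^jL$ built from the function $\chi$ of \eqref{defchi} as in Step~2 of Theorem~\ref{thm:binding}. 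The aim is the estimate
\[
\frac{C}{\ell^{2}}\int_{\{t\le b\ell\}}\psi_U^{2}\,dX \;\ge\; c\int_{\{t\ge\ell/b\}}\frac{\psi_U^{2}}{t}\,dX\qquad\text{for all }\ell\ge\ell_c,
\]
with $c,C,\ell_c$ depending only on $\nu$ and $N$. Feeding the law of $t$ under $\psi_U^2$ into (a trivial rescaling of) Lemma~\ref{calcu0} then gives $\int_{\{t\le\ell_0\}}\psi_U^2\ge\delta_0>0$, which is \eqref{eq:key}.

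The kinetic energy and the Coulomb repulsion $U\sum_{i<j}|x_i-x_j|^{-1}$ are quadratic in $\psi$ and localize exactly through the IMS formula, with errors bounded as in \eqref{eq:locerror}. The attraction $-\langle\rho_\psi,\rho_\psi\rangle_{\mathrm C}$, with $\langle f,g\rangle_{\mathrm C}:=\iint f(x)g(y)|x-y|^{-1}\,dx\,dy$, is the genuinely new object, and two of its features drive the argument. First, it is \emph{quartic}, not quadratic, in $\psi$, so for any $g$ with $\|g\|\le1$ one has $\mathcal P^{(n)}_U[g]\ge\|g\|^2\,\mathcal E^{(n)}_U(1)$ — replacing $\|g\|^4$ by $\|g\|^2$ in front of the positive self-energy only lowers the right side — and the same holds after restricting to a cluster of $n<N$ particles, seen on the reduced density matrix. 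Second, $|x-y|^{-1}$ is positive definite, so $\langle\cdot,\cdot\rangle_{\mathrm C}^{1/2}$ obeys the triangle inequality and the cross terms in the self-energy of a sum of non-negative densities are non-negative; this makes the attraction of a cluster-decomposed configuration splittable up to an explicit error and lets the one-particle localization go through with only the usual $N\pi^2(b^jL)^{-2}$ loss.

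The core step is the $\mathrm{PT}$-analogue of Lemma~\ref{lem3n}: if $\Psi$ is supported on $B_1\times\cdots\times B_N$ with the $B_i$ balls of radius $R$, $\|\Psi\|\le1$, and the indices split into clusters $C_1,C_2$ with inter-cluster ball distances $d_{ik}$, then for every $\epsilon>0$
\[
\mathcal P^{(N)}_U[\Psi]\ \ge\ \widetilde{\mathcal E}^{(N)}_U(1)\,\|\Psi\|^{2}+\sum_{i\in C_1,\,k\in C_2}\Big((U-2(1+\epsilon))\big\langle\Psi\big|\tfrac{1}{|x_i-x_k|}\big|\Psi\big\rangle+\|\Psi\|^{2}\big(\tfrac{2(1+\epsilon)}{d_{ik}+4R}-\tfrac{2}{d_{ik}}\big)\Big).
\]
Here $U/|x_i-x_k|$ is split as $(U-2(1+\epsilon))/|x_i-x_k|+2(1+\epsilon)/|x_i-x_k|$ and the second part bounded below via $|x_i-x_k|\le d_{ik}+4R$; the cross-attraction $-2\langle\rho_{C_1},\rho_{C_2}\rangle_{\mathrm C}$ is $\ge-2\|\Psi\|^{2}\sum_{ik}d_{ik}^{-1}$ because the cluster marginals sit at distance $\ge d_{ik}$ — which is exactly why one must first localize the particles into balls. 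The delicate term is the intra-cluster piece $\sum_{i\in C}(-\Delta_i)+U\sum_{i<j\in C}|x_i-x_j|^{-1}-\langle\rho_C,\rho_C\rangle_{\mathrm C}$, in which $\rho_C$ is the marginal density of the particles of $C$ — obtained by integrating out \emph{all} other coordinates, those of $C_2$ included — and whose Coulomb self-energy must be bounded by $\mathcal E^{(|C|)}_U(1)$. One passes to the reduced density matrix $\Gamma$ of the $C$-coordinates, diagonalizes $\Gamma=\sum_m\lambda_m|\phi_m\rangle\langle\phi_m|$, uses $\langle\rho_C,\rho_C\rangle_{\mathrm C}=\big\|\textstyle\sum_m\lambda_m\rho_{\phi_m}\big\|_{\mathrm C}^{2}\le(\Tr\Gamma)\sum_m\lambda_m\|\rho_{\phi_m}\|_{\mathrm C}^{2}$ (triangle inequality and Cauchy--Schwarz) together with $\Tr\Gamma=\|\Psi\|^{2}\le1$ and $\mathcal P^{(|C|)}_U[\phi_m]\ge\mathcal E^{(|C|)}_U(1)$ to obtain $\sum_{i\in C}(-\Delta_i)+U\sum_{i<j\in C}|x_i-x_j|^{-1}-\langle\rho_C,\rho_C\rangle_{\mathrm C}\ge\mathcal E^{(|C|)}_U(1)\|\Psi\|^2$; summing over the clusters and using that no partition into $\ge2$ clusters beats the two-cluster break-up energy produces the $\widetilde{\mathcal E}^{(N)}_U(1)\|\Psi\|^2$ term.

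With this in hand the assembly copies Theorem~\ref{thm:bindingN}: fix $\epsilon<\nu-1$, set $L=\delta\ell/(N-1)$ with $\delta$ small, and take $\ell_c$ large enough that (a) the deterministic terms $2(1+\epsilon)(d_{ik}+4b^jL)^{-1}-2d_{ik}^{-1}$ are $\ge0$ for all $j\ge1$, and (b) $\|\Psi_{j,\bfu}\|\le1$ for the one-particle-localized pieces (which holds once $\delta\ell_c\gtrsim1$, since $\|\Psi_{j,\bfu}\|^2\le(b^jL)^{-3N}(\pi/2)^N\|\psi_U\varphi_j\|^2$). Combining the lower bounds on the localized energies with the variational identity $\mathcal P^{(N)}_U[\psi_U]=\mathcal E^{(N)}_U(1)$ yields the displayed $\ell^{-2}$-versus-$t^{-1}$ inequality; crucially, since $\widetilde{\mathcal E}^{(N)}_U(1)\ge\mathcal E^{(N)}_U(1)$ the break-up gap is simply discarded, which is what keeps $\ell_c$ — hence $\ell_0$ and $\delta_0$ — uniform as $U\to U_c$, exactly as the helium barrier $c/|x|_\infty$ in \eqref{eq:opineq} gave a $U$-free bound there. \textbf{The main obstacle} is precisely this interplay between the nonlocal, non-quadratic attraction and the two nested localizations: unlike the polaron Hamiltonian, $\mathcal P^{(N)}_U$ admits no operator inequality, so the localized-energy bounds must be reconciled with minimality without dropping the cross-attraction, and every constant must be certified independent of $U$; pushing the attraction term through this, via the density-matrix reduction above, is where the real work lies.
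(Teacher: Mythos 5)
Your overall skeleton (shell partition in $t$, continuous ball localization, cluster splitting, calculus lemma with $\lambda=1$) is the intended one, and your density-matrix/convexity bound for the intra-cluster attraction is a reasonable substitute for part of the cluster estimate. But there is a genuine gap at the very first step of the assembly: you never justify that the \emph{full} PT energy of the minimizer dominates the sum (respectively the $\bfu$-integral) of the energies of the localized pieces. For the kinetic term and the repulsion this is the IMS identity, but the attraction is quartic, and its sign works against you. Writing $D(f,g)=\iint f(x)g(y)|x-y|^{-1}dx\,dy$ and $\rho_{\psi_U}=\sum_j\rho_{\Psi_j}$, one has
\begin{equation*}
-\alpha D(\rho_{\psi_U},\rho_{\psi_U})=-\alpha\sum_j D(\rho_{\Psi_j},\rho_{\Psi_j})-2\alpha\sum_{j<k}D(\rho_{\Psi_j},\rho_{\Psi_k}) \,,
\end{equation*}
so the positivity of the cross terms, which you invoke as the helpful feature, gives an \emph{upper} bound on the energy when you pass to localized pieces, not the lower bound your scheme needs; and the cross terms are not small (the one-particle densities of different $t$-shells overlap freely in position space, so they are order one, not $O(\ell^{-2})$ or $O(1/t)$). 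The situation is worse for the continuous ball localization: since $\rho_{\Psi_j}=\int d\bfu\,\rho_{\Psi_{j,\bfu}}$, the self-energy $D(\rho_{\Psi_j},\rho_{\Psi_j})$ is a double integral over $\bfu,\bfu'$, while $\int d\bfu\,D(\rho_{\Psi_{j,\bfu}},\rho_{\Psi_{j,\bfu}})$ is only its (measure-zero) diagonal; so the analogue of \eqref{48} simply fails for the quartic term, and your per-piece lemma, even if correct, cannot be summed back up to $\mathcal P^{(N)}_U[\psi_U]=\mathcal E^{(N)}_U(\alpha)$.

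The missing idea is precisely the one the paper flags: linearize the functional. Evaluating the pair functional \eqref{eq:pekarpot} at the minimizer's own potential $\phi_U=2\sqrt\alpha\,|x|^{-1}*\rho_{\psi_U}$, the equality case of \eqref{eq:linearize} gives $\mathcal P_{U}[\psi_U,\phi_U]=\mathcal E^{(N)}_U(\alpha)$, and for fixed $\phi_U$ the attraction is the one-body potential $-\sqrt\alpha\sum_i\phi_U(x_i)$, which localizes exactly under both partitions; one then runs the Theorem \ref{thm:bindingN} machinery on this quadratic form (the cluster step now requires splitting $\phi_U$ and its field energy between the clusters and controlling the far contributions by the $1/d$ decay, as in \cite{FLST}), and concludes with the simple calculus lemma \eqref{eq:massinside} since $\lambda=1$. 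Your reduced-density-matrix argument can play a role inside the cluster estimate, but it does not remove the need for this linearization, and without it the claimed ``$\ell^{-2}$-versus-$t^{-1}$'' inequality is not established.
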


The proof is similar to that of Theorem \ref{thm:bindingN} and is omitted. We
only note that, since we assume that there is a minimizer, we can take
$\lambda=1$ in \eqref{asps} and therefore the simple version
\eqref{eq:massinside} of the calculus lemma suffices and yields \eqref{eq:key}.
One also needs to linearize the functional, as we did in \cite{FLST} and as we
will do later in the proof of Theorem~\ref{exopt}.

\begin{proposition}\label{nonzero}
 Let $U_n$ be as in Theorem \ref{exopt} and let $\psi_n$ be the corresponding
minimizers. Then there is a \emph{non-zero} $\psi\in H^1(\R^{3N})$ and a
sequence $\{a_n\}\subset\R^3$ such that a subsequence of
$\psi_n(x_1-a_n,\ldots,x_N-a_n)$ converges weakly in
$H^1(\R^{3N})$ to $\psi$.
\end{proposition}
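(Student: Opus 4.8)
\emph{Plan and reduction.} I would run a concentration--compactness (``cluster'') argument for the minimizers $\psi_n$, using Proposition~\ref{key} as the input that rules out the only scenario which is energetically admissible at $U=U_c$ but incompatible with a nonzero limit, namely a dissociation into two separating clusters. First I would check that $(\psi_n)$ is bounded in $H^1(\R^{3N})$: since $U\mapsto\mathcal{E}^{(N)}_U(\alpha)$ is concave and monotone, hence continuous, $\mathcal{P}^{(N)}_{U_n}[\psi_n]=\mathcal{E}^{(N)}_{U_n}(\alpha)\to\mathcal{E}^{(N)}_{U_c}(\alpha)\le N\mathcal{E}^{(1)}(\alpha)<0$; writing $T_n$, $R_n\ge0$, $D_n\ge0$ for the kinetic, Coulomb and polaron-attraction parts of $\mathcal{P}^{(N)}_{U_n}[\psi_n]$ and using the stability bound $\alpha D_\psi\le\tfrac12 T_\psi+C$ valid for all normalized $\psi$ (Hoffmann--Ostenhof, Gagliardo--Nirenberg and Hardy--Littlewood--Sobolev), one gets $T_n=\mathcal{P}^{(N)}_{U_n}[\psi_n]-U_nR_n+\alpha D_n\le C+\tfrac12 T_n$, so $T_n$ is bounded; then pass to a subsequence. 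It now suffices to produce a compact $K\subset\R^{3N}$ and $a_n\in\R^3$ with $\liminf_n\int_{K+(a_n,\dots,a_n)}|\psi_n|^2\,dX>0$: a weakly $H^1$-convergent further subsequence of the translates then has, by Rellich's theorem, a limit $\psi$ with $\int_K|\psi|^2$ bounded below by that $\liminf$, hence $\psi\ne0$. So I assume for contradiction that
\[
\sup_{a\in\R^3}\int_{\{|x_i-a|\le R\ \forall i\}}|\psi_n|^2\,dX\longrightarrow0\qquad\text{for every }R>0 .
\]

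\emph{Dispersion is impossible.} Let $\rho_n=\rho_{\psi_n}$; by Hoffmann--Ostenhof, $\rho_n^{1/2}$ is bounded in $H^1(\R^3)$. If $\rho_n$ were vanishing, i.e.\ $\sup_a\int_{B_1(a)}\rho_n\to0$, then $\rho_n^{1/2}\to0$ in $L^{12/5}(\R^3)$ by Lions' vanishing lemma, so $\rho_n\to0$ in $L^{6/5}(\R^3)$ and $D_n\le C\|\rho_n\|_{6/5}^2\to0$ by Hardy--Littlewood--Sobolev; but then $\mathcal{E}^{(N)}_{U_n}(\alpha)=T_n+U_nR_n-\alpha D_n\ge-\alpha D_n\to0$, contradicting $\mathcal{E}^{(N)}_{U_c}(\alpha)<0$. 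Hence, along a subsequence, $\int_{B_1(b_n)}\rho_n\ge\varepsilon_1>0$ for suitable $b_n\in\R^3$.

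\emph{Break-up is impossible (this is where Proposition~\ref{key} enters).} Translating by $b_n$, the assumption above is unchanged and forces $\psi_n\rightharpoonup0$, yet $\rho_n$ carries mass $\ge\varepsilon_1$ in $B_1(0)$; since $\psi_n\to0$ in $L^2_{\mathrm{loc}}(\R^{3N})$, that mass must sit on configurations in which some other particle is arbitrarily far away. Iterating the dichotomy of the previous step on these escaping particles (concentrate near a new centre, or disperse) produces a decomposition of $\psi_n$ into at least two groups of particles whose mutual separation tends to infinity. On such configurations $\sum_{i,j}|x_i-x_j|^2\to\infty$, hence $\int_{\{\sum_{i,j}|x_i-x_j|^2\le\ell_0^2\}}|\psi_n|^2\,dX\to0$ --- contradicting Proposition~\ref{key}, which applies with $\nu:=\tfrac12\bigl(1+U_c/2\alpha\bigr)>1$ (legitimate since $U_c>2\alpha$ by \cite{lewin}) and gives a fixed $\delta_0>0$ lower bound for this integral along the whole sequence. (Equivalently, one may argue by energy: a group of $1\le n_1<N$ particles that stays tight contributes $\ge\mathcal{E}^{(n_1)}_{U_c}(\alpha)$ while the remainder contributes $\ge0$, forcing $\mathcal{E}^{(N)}_{U_c}(\alpha)\ge\mathcal{E}^{(n_1)}_{U_c}(\alpha)$, which is false because $\mathcal{E}^{(N)}_{U_c}(\alpha)\le\mathcal{E}^{(n_1)}_{U_c}(\alpha)+\mathcal{E}^{(N-n_1)}_{U_c}(\alpha)$ with $\mathcal{E}^{(N-n_1)}_{U_c}(\alpha)<0$.) This gives the desired contradiction and, as in the remark after Theorem~\ref{hohosi}, it is the nonzero weak limit that one seeks.

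\emph{Main obstacle.} The delicate part is the break-up step: turning ``some particle escapes'' into a bona fide cluster decomposition of an $N$-body minimizing sequence, and verifying that $\mathcal{P}^{(N)}_U$ is asymptotically additive over the clusters --- that $T$, $R$ and, above all, the attractive term $\alpha\iint\rho_\psi(x)\rho_\psi(y)|x-y|^{-1}\,dx\,dy$ localize correctly when groups of particles separate, with the cross-terms vanishing. This is the standard concentration--compactness bookkeeping (as usual one also linearizes the functional around a reference density, exactly as in \cite{FLST}); the essential new input is Proposition~\ref{key}, which is precisely what forbids the dissociated state that would otherwise be compatible with $U=U_c$.
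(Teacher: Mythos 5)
There is a genuine gap, and it sits exactly where your sketch is vaguest: the ``break-up is impossible'' step. From your contradiction hypothesis you obtain that a \emph{positive fraction} (of order $\varepsilon_1/N$) of the mass lives on configurations where one particle is near $b_n$ and another is far away; but even granting a full cluster decomposition for that part of the mass, this does not imply $\int_{\{\sum_{i,j}|x_i-x_j|^2\le\ell_0^2\}}\psi_n^2\,dX\to0$. Proposition~\ref{key} only asserts that a fraction $\delta_0$ of the mass sits on tight configurations and is perfectly consistent with the remaining $1-\delta_0$ being dissociated, so no contradiction results. Worse, the scenario that actually threatens the proposition is not particle break-up at all but spreading of the \emph{center of mass}: a priori $\psi_n$ could consist of many far-separated bumps of individually vanishing mass, in each of which all $N$ particles stay within $\ell_0$ of one another. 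This satisfies \eqref{eq:key}, contains no pair of separating particle clusters, and still makes every diagonally translated subsequence converge weakly to zero; your argument never excludes it (the concentration point $b_n$ of $\rho_n$ need not see the tight part of the mass). Excluding it requires a quantitative use of the nonlinearity --- the attraction is quadratic in the mass of a bump, so splitting the mass into many small pieces raises the energy strictly --- and that is precisely the ``asymptotic additivity'' bookkeeping you defer to your ``main obstacle'' paragraph, i.e.\ the actual proof is missing. Your parenthetical energy alternative also fails, and it fails exactly at threshold: an escaping cluster need not contribute $\ge 0$; if it concentrates far away it contributes $\ge\mathcal E^{(N-n_1)}_{U_c}(\alpha)$, and the bound you get is $\mathcal E^{(N)}_{U_c}(\alpha)\ge\tilde{\mathcal E}^{(N)}_{U_c}(\alpha)$, which is an \emph{equality} by the hypothesis of Theorem~\ref{exopt}. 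This degeneracy is the whole difficulty of the threshold case and the reason a pure concentration--compactness/energy argument cannot close it.

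The paper avoids the cluster decomposition altogether. It multiplies the Euler--Lagrange equation for $\psi_n$ by $\eta(\sum_{i,j}|x_i-x_j|^2/R^2)^2\psi_n$ and integrates: since the Lagrange multiplier satisfies $\liminf\lambda_n\le\mathcal E^{(N)}_{U_c}(\alpha)<0$ and Proposition~\ref{key} guarantees that the cutoff retains mass at least $\delta_0$, the attractive term $\sum_i\int\phi_n(x_i)u_n^2\,dX$ must stay bounded away from zero; uniform $L^q$ bounds on $\phi_n$ then bound $\|u_n\|_{2q'}$ from below, the `pqr theorem' shows $u_n$ does not tend to zero in measure, and the translation-compactness theorem \cite[Thm.~8.10]{LiLo} supplies the translations $a_n$ and the nonzero weak limit. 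It is this Euler--Lagrange mechanism --- a strictly negative eigenvalue must be balanced by the attraction --- that rules out the mass-spreading scenario, and your proposal contains no substitute for it. Your $H^1$-boundedness argument, the ``dispersion is impossible'' step, and the observation that $U_c>2\alpha$ (from \cite{lewin}) legitimizes the use of Proposition~\ref{key} are all fine, but as written the proof does not close.
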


\begin{proof}
 Let $\eta:[0,\infty)\to[0,1]$ be a smooth, compactly supported function with
$\eta(t)=1$ for $t$ in a neighborhood of zero. We shall prove the following:
There is an $R>0$, a non-zero $u\in H^1(\R^{3N})$ and a
sequence $\{a_n\}\subset\R^3$ such that a subsequence of
$u_n(x_1,\ldots,x_N) = \eta(\sum_{i,j}|x_i-x_j|^2/R^2)
\psi_n(x_1-a_n,\ldots,x_N-a_n)$ converges weakly in
$H^1(\R^{3N})$ to $u$. From
this, one easily derives the statement of the lemma.

We first note that $\|u_n\|\leq \|\psi_n\|= 1$ and $\sum_i \|\nabla_i u_n
\|^2 \leq \sum_i \|\nabla_i \psi_n \|^2 + C R^{-2}$ for some $C$ independent of
$n$. Hence the sequence $u_n$ is bounded in $H^1(\R^{3N})$, and the assertion
will follow by an easy extension of \cite[Thm. 8.10]{LiLo}, provided we can show
that $u_n$ does not converge to zero in measure.

To prove this, we use the Euler-Lagrange equation satisfied by the functions
$\psi_n$,
$$
\left( \sum_{n=1}^N \left( p_i^2 - \sqrt{\alpha}\, \phi_n(x_i) \right) + U
\sum_{i<j}
\frac1{|x_i-x_j|} \right) \psi_n = \lambda_n \psi_n \,,
$$
with $\phi_n= 2\sqrt\alpha\, \rho_{\psi_n}*|x|^{-1}$ and $\lambda_n = \mathcal
E_{U_n}^{(N)}(\alpha)-\alpha \iint \rho_{\psi_n}(x) |x-y|^{-1}
\rho_{\psi_n}(y)\,dx\,dy$. Multiplying the equation by
$\eta(\sum_{i, j}|x_i-x_j|^2/R^2)^2 \psi_n(x_1,\ldots,x_N)$ and integrating we
find that
\begin{align}\label{eq:energyest}
& \sum_{i=1}^N \int_{\R^{3N}} \left( |\nabla_i u_n|^2 -\sqrt{\alpha}\,
\phi_n(x_i)
u_n^2 \right) \,dX + U \sum_{i<j} \int_{\R^{3N}} \frac{u_n^2}{|x_i-x_j|}
\,dX \notag \\ 
& \qquad \leq (\lambda_n + C R^{-2} ) \int_{\R^{3N}} u_n^2 \,dX \,.
\end{align}
Since $\liminf_{n\to\infty} \lambda_n \leq \mathcal E_{U_c}^{(N)}(\alpha) <0$
we can choose $R>0$ sufficiently large such that $\liminf_{n\to\infty}
(\lambda_n+C R^{-2}) <0$. Moreover, by our key estimate \eqref{eq:key} we know
that, after increasing $R$ if necessary, we have
$$
\liminf_{n\to\infty} \int_{\R^{3N}} u_n^2 \,dX > 0 \,.
$$
(In order to apply Proposition \ref{key} we use the fact that $U_c>2\alpha$,
see \cite{lewin}.) From this we conclude that the only negative term on the left
side of \eqref{eq:energyest} cannot vanish in the limit, i.e.,
\begin{equation}
 \label{eq:lowerbd}
\liminf_{n\to\infty} \sum_{i=1}^N \int_{\R^{3N}} \phi_n(x_i) u_n^2 \,dX >0 \,.
\end{equation}

We now use the `pqr theorem' \cite[Ex. 2.22]{LiLo} to conclude from
\eqref{eq:lowerbd} that $\{u_n\}$ does not converge to zero in measure.
This theorem states, quite generally, that for any $1\leq p<q<r\leq\infty$ and
for any constant $C<\infty$ there are constants $\epsilon,\delta>0$ such that if
a function $f$ satisfies $\|f\|_p \leq C$, $\|f\|_r \leq C$ and $\|f\|_q \geq
C^{-1}$, then $|\{ |f| \geq \epsilon \}|>\delta$.

Returning to our concrete
situation, we note that, since $\{u_n\}$ is uniformly bounded in $H^1(\R^{3N})$, we
known from Sobolev inequalities that it is so in $L^p(\R^{3N})$ as well for any
$2\leq p\leq 6N/(3N-2)$.
To apply the `pqr theorem' we need to show that $\|u_n\|_p$ is
uniformly bounded \emph{away from zero} for some $2< p< 6N/(3N-2)$. First, note
that, since $(\sqrt{\rho_n})$ is uniformly bounded in $H^1(\R^3)$, $\phi_n$ is
uniformly bounded in $L^q(\R^3)$ for any $3<q\leq \infty$. Now we choose $C>0$
such that $\eta(t)=0$ for $t\geq C$ and we put
$$
g_n(X)=\theta(CR^{2}-\sum_{i,j}|x_i-x_j|^2 ) \sum_i \phi_n(x_i) \,.
$$
The sequence $\{g_n\}$ is uniformly bounded in $L^q(\R^{3N})$ for any $3<q\leq
\infty$, and by \eqref{eq:lowerbd} we have
$$
0<\delta \leq \sum_{i=1}^N \int_{\R^{3N}} \phi_n(x_i) u_n^2 \,dX 
= \int g_n u_n^2\,dX \leq \|g_n\|_q \|u_n^2\|_{q'}
$$
where $q^{-1}+q'^{-1}=1$. This shows that $(u_n)$ is uniformly bounded away
from zero in $L^{2q'}(\R^{3N})$, and we can choose $3<q<\infty$ in such a way
that
$2<2q'<6N/(3N-2)$. Hence the `pqr theorem' implies that $u_n$ does not tend to
zero in measure, which completes the proof of the lemma.
\iffalse
The fact that $u_n$ does not converge weakly to zero follows from this bound,
together with the following
lemma applied to $f_n=|u_n|^2$ and $g_n=\theta(CR^2-\sum_{i,j}|x_i-x_j|^2 )
\sum_n \phi_n(x_i)$. (Here the constant $C$ is chosen in such a way that
$\theta(CR^2-\sum_{i,j}|x_i-x_j|^2 ) \eta(\sum_{i,j}|x_i-x_j|^2/R^2) =
\eta(\sum_{i,j}|x_i-x_j|^2/R^2)$.) Since $(u_n)$ is uniformly bounded in $H^1$
we
known from Sobolev inequalities that $(f_n)$ is uniformly bounded in $L^p$ for
any
$1\leq p\leq 3N/(3N-2)$. On the other hand, by Hoffmann-Ostenhof and Sobolev,
$\sqrt{\rho_n}$ is uniformly bounded in $L^q$ for any $2\leq q\leq 6$, and
therefore, by weak Young, $\phi_n$ is uniformly bounded in $L^q$ for any $3<
q< \infty$. Hence $g_n$, as a function of $3N$ variables, is uniformly
bounded in $L^q$ for any $3<q< \infty$, as well. This shows that we are,
indeed, in the set-up of the following lemma.
\fi
\end{proof}

We now turn to the proof of Theorem \ref{exopt}. An important ingredient, which
we have already used in \cite{FLST}, is that the energy functional can be
linearized. In order to state this precisely we define for any pair
$(\psi,\phi)\in H^1(\R^{3N})\cap\dot H^1(\R^3)$ (with $\psi$ not necessarily
normalized) the energy functional
\begin{align}\nonumber
\mathcal P_U[\psi,\phi] & := \sum_{i=1}^N \int_{\R^{3N}} \left( |\nabla_i
\psi|^2 - \sqrt{\alpha} \phi(x_i) |\psi|^2 \right)\,dX + U \sum_{i<j}
\int_{\R^{3N}} \frac{|\psi(X)|^2}{|x_i-x_j|} \,dX
\\ & \quad
+ \frac{1}{16\pi} \int_{\R^3} |\nabla \phi|^2 \,dx \int_{\R^{3N}} |\psi(X)|^2
\,dX  \,, \label{eq:pekarpot}
\end{align}
The crucial observation is that for any $\psi$ and $\phi$ one has 
\begin{align}\label{eq:linearize}
\mathcal P_U[\psi,\phi] 
\geq \|\psi\|^{2}\ \mathcal P^{(N)}_{U}[\|\psi\|^{-1}\psi] \,,
\end{align}
with equality if and only if $\phi=2\sqrt{\alpha} \
|x|^{-1}*\rho_{\psi/\|\psi\|}$. We are
now ready to give the

\begin{proof}[Proof of Theorem \ref{exopt}]
 Because of Proposition \ref{nonzero}, after passing to a subsequence and a
translation we may assume that $\{\psi_n\}$ converges weakly in
$H^1(\R^{3N})$ to $\psi\not\equiv 0$. Denoting $\tilde\psi_n = \psi_n - \psi$
and $\phi_n = 2\sqrt{\alpha}\ |x|^{-1}*\rho_{\psi_n}$ we claim that
\begin{equation}
 \label{eq:decomp}
\mathcal P_{U_n}[\psi_n,\phi_n] = \mathcal P_{U_c}[\psi,\phi_n] + \mathcal
P_{U_c}[\tilde\psi_n,\phi_n] + o(1) \,.
\end{equation}
Assuming \eqref{eq:decomp} for the moment, we now finish the proof of
Theorem \ref{exopt}. Indeed, in view of \eqref{eq:linearize}, we conclude from
\eqref{eq:decomp} that
$$
\mathcal E^{(N)}_{U_c}(\alpha) + o(1) 
\geq \|\psi\|^{2}\ \mathcal P^{(N)}_{U_c}[\|\psi\|^{-1}\psi]
+ \|\tilde\psi_n\|^{2}\ \mathcal
P^{(N)}_{U_c}[\|\tilde\psi_n\|^{-1}\tilde\psi_n] +o(1) \,.
$$
Since $\mathcal P^{(N)}_{U_c}[\|\tilde\psi_n\|^{-1}\tilde\psi_n]\geq \mathcal E^{(N)}_{U_c}(\alpha)$
and since $\|\tilde\psi_n\|^{2}=1-\|\psi\|^{2}+o(1)$, we learn
that
$$
\mathcal E^{(N)}_{U_c}(\alpha) +
o(1) \geq \mathcal P^{(N)}_{U_c}[\|\psi\|^{-1}\psi] +o(1)  \,,
$$
that is, $\|\psi\|^{-1} \psi$ is an optimizer at $U=U_c$.

It remains to prove \eqref{eq:decomp}. The analogous assertion separately for the terms
$\sum_{i=1}^N \int |\nabla_i \psi|^2 \,dX$ and $\sum_{i<j} \int
\frac{|\psi(X)|^2}{|x_i-x_j|} \,dX$ is an easy consequence of weak convergence.
Moreover,
$$
\int_{\R^3} |\nabla \phi_n|^2 \,dx
= \int_{\R^3} |\nabla \phi_n|^2 \,dx \int_{\R^{3N}} |\psi(X)|^2
+ \int_{\R^3} |\nabla \phi_n|^2 \,dx \int_{\R^{3N}} |\tilde\psi_n(X)|^2 + o(1)
$$
by the weak convergence of $\psi_n$ and the fact that $\int |\nabla \phi_n|^2
\,dx$ is uniformly bounded. Thus it remains to prove that
\begin{equation}
 \label{eq:decompzero}
\int_{\R^{3N}} \phi_n(x_i) \tilde\psi_n \psi \,dX = o(1)
\end{equation}
for each $i$. To prove this, we let $R>0$ and split the integral on the left
side according to whether $|X|_\infty=\max\{|x_1|,\ldots,|x_N|\}$ is bigger or smaller than $R$. In this
way we find that 
\begin{align}\nonumber
&\left| \int_{\R^{3N}} \phi_n(x_i) \tilde\psi_n \psi \,dX \right| 
\\ &\leq \sup_k \|\phi_k \|_\infty \left( \sup_k \|\tilde\psi_k\| \|\psi \ \theta(|X|_\infty \geq R) \| +
\|\tilde\psi_n \ \theta(|X|_\infty \leq R) \| \right) \,. \label{eq:apriori}
\end{align}
Here we also used that $\|\psi\|\leq 1$. The fact
that $\{\phi_n\}$ is uniformly bounded in $L^\infty(\R^3)$ follows from the fact that 
$\{\sqrt{\rho_{\psi_n}}\}$ is uniformly bounded in $H^1(\R^3)$. We can make the
first
term on the right side of \eqref{eq:apriori} as small as we like by choosing
$R$ large. On the other hand, for any fixed $R$, the sequence $\{\tilde \psi_n\,
\theta(|X|_\infty \leq R)\}$ tends to zero in $L^2(\R^{3N})$ by
the Rellich-Kondrashov Theorem \cite[Thm.~8.9]{LiLo}, and therefore we can make the second term on the right side
of \eqref{eq:apriori} as small as we like by choosing $n$ large. This completes
the proof of \eqref{eq:decompzero} and therefore of Theorem~\ref{exopt}.
\end{proof}

\bigskip {\it Acknowledgments.} We are grateful to Herbert Spohn for
making us aware of this problem. Partial financial support from the
U.S.~National Science Foundation through grant PHY-0965859 (E.L.)  and
the NSERC (R.S.) is gratefully acknowledged.

%%%%%%%%%%%%%%%%%%%%%%%%%%%%%%%%%%%%%%%%%%%%%%%%%%%

\end{document}